\newtheorem{proposition}{Proposition}
\providecommand{\norm}[1]{\lVert#1\rVert}
\begin{document}
\title{Analysis of Multiple Flows using Different High Speed TCP protocols on a General Network}
\author{Sudheer Poojary \qquad Vinod Sharma \\ Department of ECE, Indian Institute of Science, Bangalore \\ Email: \{sudheer, vinod\}@ece.iisc.ernet.in}
\maketitle
\begin{abstract}
We develop analytical tools for performance analysis of multiple TCP flows (which could be using TCP CUBIC, TCP Compound, TCP New Reno) passing through a multi-hop network. We first compute average window size for a single TCP connection (using CUBIC or Compound TCP) under random losses. We then consider two techniques to compute steady state throughput for different TCP flows in a multi-hop network. In the first technique, we approximate the queues as M/G/1 queues. In the second technique, we use an optimization program whose solution approximates the steady state throughput of the different flows. Our results match well with ns2 simulations.
\end{abstract}

\begin{keywords}
Internet, High speed TCP protocols, TCP CUBIC, TCP Compound, Multihop network, Performance analysis.
\end{keywords}
\section{Introduction}
\label{sec:introduction}
TCP ensures reliable end-to-end data transfer between hosts and also does flow control and congestion control. While the goal of flow control is to avoid overwhelming the receiver with more packets than it can handle, the role of congestion control is to avoid congestion over the network. A good congestion control scheme must avoid congestion collapse, i.e., overwhelming congestion in the network with severely degraded throughput for all users. It must also be fair and efficient. 

Traditional TCP variants (TCP Reno, TCP New Reno) have been very effective in preventing Internet congestion collapse for more than two decades \cite{Huston06}. However as link speeds increase, these TCP variants have been found to be inefficient. The traditional TCP variants use AIMD (additive increase multiplicative decrease) algorithm for congestion window evolution. While AIMD is fair in the sense that all flows going through the same set of links get the same throughput \cite{Chiu1989Analysis}, it need not be efficient, especially for high bandwidth-delay product scenarios. The primary reason for the inefficiency being the slow rate of increase of window size when there is no congestion and the drastic multiplicative drop when there is congestion. Recent years have brought forth high speed variants of TCP (e.g., TCP CUBIC \cite{Ha2008} , FAST TCP \cite{Jin05}, H-TCP \cite{Leith2004}, Compound TCP \cite{Tan2006Infocom}). These alter the `AI' phase to make it more aggressive so that links are rarely underutilized. The high-speed TCP variants are now commonly used. TCP CUBIC has been in use on Linux systems since 2006 (Linux kernel 2.6.16) and TCP Compound is used for Windows server. Consequently, measurements in \cite{Yang2014} on $30000$ web servers reveal that a majority of these web servers ($ > 60 \%$) use either TCP CUBIC, TCP BIC (the predecessor of CUBIC) or TCP Compound whereas usage of Reno is restricted to less that $15 \%$.

Along with rapid rise in link speeds, we see a rapid increase in the number of mobile devices accessing the Internet. Wireless links are prone to fading, interference and attenuation which may cause random packet losses. Since TCP treats losses as indication of congestion and reduces its window size in response to packet loss, random packet losses considerably deteriorate TCP throughput.

In this paper we develop models for TCP performance in a network with multiple queues and multiple TCP flows (these may use Compound, CUBIC or New Reno) with random packet losses. We will be interested in the performance of long-lived flows (e.g., HTTP streaming, backups, large file downloads).  The slow start phase of TCP is of interest when studying performance of short-lived flows but does not impact the performance of long-lived flows if they have low packet rates $(< 1\%)$. Since our focus is on performance of the different TCP variants mentioned above, which differ only in the congestion avoidance phase, we will not consider the slow start phase.

\subsection{High speed TCP variants: A brief overview}
High speed TCP variants alter the congestion avoidance phase behaviour of the traditional TCP to achieve higher efficiency. High speed TCP (HSTCP) \cite{rfc3649} addresses the issue of low link utilization in large bandwidth-delay product (BDP) networks by making the window increments and decrements a function of the current window size. Scalable TCP \cite{Kelly2003Scalable} replaces additive increase by multiplicative increase so that the TCP flow rate converges quickly to the link speeds in large BDP networks. H-TCP \cite{Leith2004} alters the window increments so that increments depend on time since last congestion. BIC congestion control \cite{Xu04} uses binary search to obtain an efficient operating point. FAST TCP \cite{Wei2006} is different from the earlier described variants in the sense that it uses queuing delay as a measure of congestion unlike the other variants which use packet loss.  It can be considered to be a high speed variant of TCP Vegas\cite{Brakmo1995}. FAST TCP uses variable size increments and decrements based on the estimate of queuing in the network. In \cite{Ha2008}, the authors propose TCP CUBIC congestion control algorithm. Like H-TCP, TCP CUBIC window size is a function of time elapsed since last congestion. The authors choose a cubic function for window evolution. TCP Compound \cite{Tan2006Infocom} is a delay-based congestion control algorithm. Delay-based congestion control algorithms experience lower losses, lower queuing delays and better RTT-fairness as compared to loss-based congestion control algorithms \cite{Wei2006}, \cite{Bonald1999}. However, in a mixed environment, when competing with flows using loss-based congestion control, they are not able to get their fair share of capacity. To address this drawback, TCP compound window also has a loss-based component which gives it a worst-case performance of TCP New-Reno. An extensive survey describing the different TCP variants can be found in \cite{Afanasyev10}. 

There is a considerable amount of literature on simulation and experimental evaluation of the high speed TCP variants. In \cite{Jain2011}, the authors perform experimental evaluation of TCP CUBIC in a small buffer regime.  Results in \cite{Bateman2008} show that the ns2 implementation results for several high speed TCP variants match well with experimental results. The intra-protocol and inter-protocol fairness of different high speed TCP variants has been studied in \cite{Weigle2006} and \cite{Molnar2009} using ns2 simulations. The Reference \cite{Xue2014} is a recent experimental survey of high speed TCP fairness. It evaluates fairness of TCP SACK, HSTCP and CUBIC TCP on a $10$ Gbps optical link. They show that fairness is a function of buffer sizes and queue management schemes at the routers with RED routers yielding more equitable rate allocations than droptail.

\subsection{Analytical studies of TCP}
Traditional TCP (TCP Tahoe, TCP Reno and TCP New Reno) has been extensively studied and analyzed. In \cite{Mathis1997}, the authors use a periodic loss model to compute the average window size of TCP as a function of the packet error probability. The reference \cite{Padhye2000} uses Markov regenerative processes to model the window evolution of TCP Reno (congestion avoidance phase) under random losses. In \cite{Cardwell2000}, the authors consider the effect of connection establishment, slow start and congestion avoidance phase on TCP latency. In \cite{Sharma2002}, the authors compute the throughput and mean sojourn time of TCP Tahoe and TCP Reno flows over a single bottleneck link using RED queue management. The link also carries UDP traffic which has priority over the TCP traffic. In \cite{Sharma2004}, the authors prove stability of multiple TCP Tahoe and TCP Reno flows passing through a single drop-tail or RED queue in the presence of UDP traffic. They then extend their results to the situation when there are multiple bottleneck links in tandem. In \cite{Gupta2006}, the authors use Markovian models to compute mean download times for ON-OFF TCP Tahoe and TCP Reno flows and throughput for long-lived TCP flows in the presence of UDP traffic.  

An optimization-based approach is used to analyze network congestion control in \cite{Kunniyur2003, Hurley1999, Low2003, Massoulie2002, Altman2002}. In \cite{Kunniyur2003}, the authors consider a generic network-wide global optimization problem and derive a distributed congestion control algorithm whose equilibrium rate allocations are a solution to the global optimization problem. As opposed to this approach, \cite{Hurley1999} and \cite{Low2003} start with distributed congestion control algorithms and derive the corresponding network-wide global optimization problem. In \cite{Massoulie2002}, under a AIMD TCP-like scheme with fixed window size, the authors show that FIFO queuing gives proportional-fair rate allocation, longest queue first gives maximum sum-rate and fair queuing policy yields max-min fair rate allocation. An analytical model for identifying the bottleneck links in a multi-hop network is given in \cite{Altman2002}. 


Differential equations are used to model TCP behaviour in \cite{Lakshman1997, Misra2000, Baccelli2002, Shakkottai2002}. In \cite{Lakshman1997}, the authors compute steady state throughput of TCP Tahoe and TCP Reno under random losses. In \cite{Sharma2004} and \cite{Misra2000}, the authors model transient behaviour of RED routers supporting TCP flows. A mean-field model for TCP is developed in \cite{Baccelli2002} for multiple TCP flows sharing a single bottleneck link which employs RED queue management. Under the same setup of single bottleneck link and multiple TCP flows, the authors in \cite{Shakkottai2002} show that in the many flows regime the data rate evolution and drop rate evolution at the queues can be approximated by a deterministic system.

\subsection{Previous studies of TCP CUBIC and Compound}
The newer variants of TCP have fewer analytical studies. In \cite{Bao2010}, the authors use a Markovian model to compute steady state throughput of a single TCP CUBIC connection in a wireless environment. A mean field model is used for performance analysis of  multiple TCP CUBIC connections going through a single drop-tail bottleneck link in \cite{Belhareth2013}. In \cite{Blanc2009}, the authors compute throughput of a single long-lived Compound TCP under random losses through a Markovian model. There are  deterministic models for computation of average window size of CUBIC and Compound TCP in \cite{Ha2008} and \cite{Tan2006Infocom} respectively. One advantage that these models have over the Markovian models is that they provide a closed-form expression for the average window size of a TCP flow in terms of its RTT and packet error rate. We have investigated these closed form expressions against our Markovian models in \cite{Poojary2013} and have found that the closed form expressions are not as accurate as the Markovian model results. In \cite{Ghosh2014Allerton, Chavan2015, Manjunath2015}, the authors study the performance of TCP Compound using control theoretic techniques and derive stability conditions for TCP Compound. In these papers, it is shown that when multiple TCP flows share a single bottleneck queue, the queue sizes and the link utilization have oscillatory behaviour when the feedback delays (round trip times of the flows) and buffer sizes are large. In \cite{Ghosh2014Allerton}, the authors evaluate the TCP Compound performance as a function of buffer size in the bottleneck queue. In \cite{Chavan2015}, the authors study the performance of TCP Compound with a proportional integral enhanced queue management policy whereas in \cite{Manjunath2015}, RED queue management policy is considered.

\subsection{Our Contribution}
Markov models for TCP CUBIC and TCP Compound do exist in \cite{Bao2010} and \cite{Blanc2009} respectively. However, the Markov model for TCP CUBIC in \cite{Bao2010} assumes a different loss model; the inter packet loss durations are assumed Poisson. In our setup, we assume that packets are lost independently of other packets. This scenario is close to the approach used in \cite{Padhye2000, Kunniyur2003}. We note that the Markov model for TCP Compound in \cite{Blanc2009} is similar to our model. However, our Markov model for TCP Compound, unlike in \cite{Blanc2009}, also includes the queue lengths resulting in better approximation in real world scenario. Also, our Markov chain models for TCP CUBIC and TCP Compound have less computational complexity than those in \cite{Bao2010} and \cite{Blanc2009}.

In this paper, we develop techniques for performance analysis of different TCP flows (using TCP Compound, TCP CUBIC or Reno) over a general network with multiple bottleneck links. There are a number of simulation and experimental evaluations for performance analysis of these TCP variants over different network topologies.  However, to the best of our knowledge, our work is the first theoretical model for the joint performance analysis of these high speed TCP variants over a general network with multiple bottleneck links. Since presence of different types of TCPs affects the throughputs of each other in complicated ways and this is the practical scenario in the current Internet, it is important to study this setup. We validate our model approximations via extensive ns2 simulations.

The organization of our paper is as follows. In Section \ref{sec:system_model}, we describe our system model. We describe a Markovian model to compute average window size of a single TCP CUBIC connection with fixed RTT under random losses in Section \ref{sec:cubic_markovmodel}. In Section \ref{sec:ctcp_markovmodel}, we develop a model for computing the average window size of a single TCP Compound connection with non-negligible queuing under random losses. In Section \ref{sec:multiple_tcp_models}, we describe two techniques viz., M/G/1 approximation and an optimization based approach to compute the steady state average window size and the throughputs for TCP flows (which could be TCP CUBIC, TCP Compound or TCP New Reno) over a multi-hop network. We compare our theoretical results with simulations in Section \ref{sec:simulation_results}. Section \ref{sec:conclusion} concludes our paper.


\section{System Model}
\label{sec:system_model}
\begin{figure}
  \centering
  \includegraphics[scale=0.31]{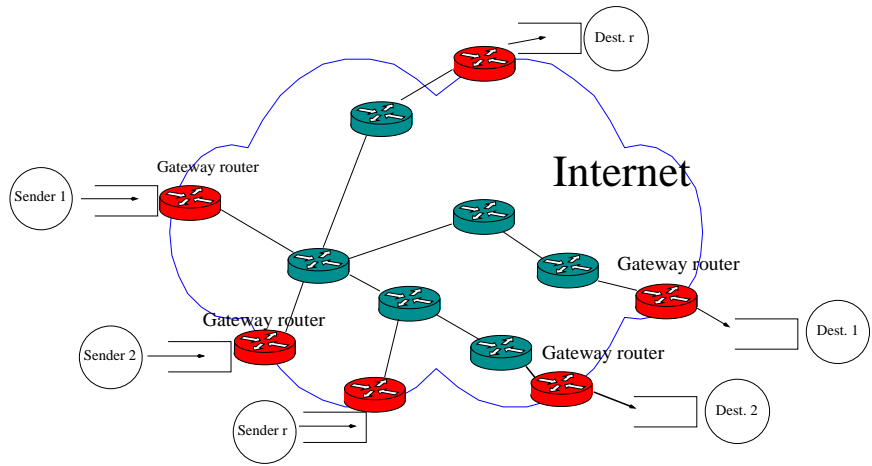}
  \caption{The general network, $(\mathcal{R}, \mathcal{L})$}
  \label{fig:general_network_3}
  \vspace*{-0.3cm}
\end{figure}

Consider a general network of routers as shown in Figure \ref{fig:general_network_3}. A set $\mathcal{R}$ of TCP flows is passing through this network. The TCP flows are carrying long files. A TCP connection may be using TCP New Reno, TCP Compound or TCP CUBIC. We denote the set of links by $\mathcal{L}$. Some of the access links may be wireless. For a flow $r \in \mathcal{R}$, let $\Delta_r$ be the constant round-trip delay (this includes propagation and transmission delays at the links). Let each packet of flow $r$ be lost with probability $p_r$ on its path, independently of others. This indicates that the packet losses in our system are mainly due to transmission errors on the wireless links and neglects the buffer overflows. This is increasingly the scenario in practical networks. We also assume that the packet losses for different flows are independent. 

Let $A$ be the incidence matrix for the network with $A(l,r) = 1$ if packets of flow $r$ go through link $l$. It is possible that besides the TCP packets, the TCP ACKs are also subject to queuing delays (due to congestion on the path from destination to source). Let $B$ be the incidence matrix for the TCP ACK packets, i.e., $B(l,r) = 1$ if TCP ACK packets of flow $r$ go through link $l$. 


We would like to compute the throughput of each TCP connection in this setup. For this, first, in Sections \ref{sec:cubic_markovmodel} and \ref{sec:ctcp_markovmodel}, we derive average window size for a single TCP flow using TCP CUBIC and TCP Compound, respectively, through a single bottleneck link  . Then, using these, in Section \ref{sec:multiple_tcp_models}, we describe techniques to compute the steady state throughputs attained by the different TCP flows in the general network.

\section{A Markov Model for TCP CUBIC}
\label{sec:cubic_markovmodel}
In this section, we develop a model for a single TCP CUBIC connection with constant round trip time (RTT). Any packet received can be in error with probability $p$ independent of other packets. This is a realistic assumption for wireless links and is commonly made for TCP models in the literature (\hspace{-.05mm}\cite{Mathis1997}, \cite{Padhye2000}, \cite{Cardwell2000},  \cite{Lakshman1997},  \cite{Bao2010}, \cite{Blanc2009},  ). We assume that only TCP data packets are lost. The ACKs are typically smaller in size and hence are less likely to be in error. However, the model can be easily extended to the case where ACKs may also be subject to random losses.

We are interested in the performance of long-lived flows (such as file transfers and video streaming) as these are quite common over the current Internet. Hence, we will ignore the slow start phase of window evolution which does not impact throughput for long-lived flows. Also, TCP CUBIC and TCP Compound (also other high speed TCP variants) only differ in the congestion avoidance phase. 

In the congestion avoidance phase, TCP CUBIC uses a non-linear cubic function, ($W_{cubic}$ in equation \eqref{eqn:tcpCUBIC}) for window evolution. However the window sizes obtained by the cubic function can at times be smaller than TCP Reno. In such a scenario, TCP CUBIC uses another function, $W_{reno}(t)$ for window evolution. Assuming that at time $t = 0$, there is a packet loss and there are no further losses in $(0,t)$, the window size, $W(t)$ of TCP CUBIC at time $t$ is given by $W(t) = \max \{ W_{cubic}(t), W_{reno}(t) \}$, where 
\begin{equation}
\begin{split}
W_{cubic}(t) & =  C \Biggl(t - \sqrt[3]{\frac{W_0  \beta}{C}} \Biggr)^3 + W_0, \\
W_{reno}(t) & =   W_{0}  (1 - \beta) + 3  \frac{\beta}{2 - \beta}  \frac{t}{R}, \\
\end{split}
\label{eqn:tcpCUBIC}
\end{equation}
and $W_0$ is the window size at time $t=0$, $C$ is a constant, $\beta$ is the multiplicative drop factor and $R$ is the round trip time of the connection. If there is a packet loss at time $t$, the window size is reduced to $(1 - \beta) W(t)$. 

Although in principle the equations in \eqref{eqn:tcpCUBIC} are evolving continuously in time, because of constant RTT, we can assume that the TCP source updates its window size at the beginning of each RTT and transmits that many number of packets. This is usually assumed in TCP studies and will be validated via simulations. The evolution of TCP CUBIC, as given by \eqref{eqn:tcpCUBIC} can be modeled using a simple Markov chain. Let $W_n \in \{1, 2, \cdots \}$ denote the window size at the beginning of the $(n+1)^{st}$ RTT. Let $W_n^{\prime}$ denote the window size at the last packet loss epoch prior to the end of the $n^{th}$ RTT and $T_n$ denote the number of RTT epochs elapsed between the last packet loss and the $n^{th}$ RTT. If there is no loss between the $n^{th}$ and the $(n+1)^{st}$ RTT, $W_{n+1}^{\prime} = W_n^{\prime}$ and $T_{n+1} = T_{n} + 1$. If there is a loss we use \eqref{eqn:tcpCUBIC} and set
\begin{equation}
\label{eqn:tcpCUBIC_MC_no_loss}
W_n = \max \{ C \Bigl(R T_n - \sqrt[3]{\frac{W_n^{\prime}  \beta}{C}} \Bigr)^3 + W_n^{\prime} ,  W_n^{\prime}  (1 - \beta) + 3  \frac{\beta}{2 - \beta}  T_n \},
\end{equation}
$W_{n+1}^{\prime} = W_n$ and $T_{n+1} = 0$. The probability of no loss between the $n^{th}$ and the $(n+1)^{st}$ RTT is given by $(1-p)^{W_n}$. We illustrate the $\{W_n, W_n^{\prime}, T_n\}$ processes in Figure \ref{fig:notation_2}. In the Figure, we assume that there is a packet loss just before the first RTT. On packet loss, the value of $W_n^{\prime}$ is updated and $T_n$ is set to $0$. For example, since there is a loss in RTT $3$ in Figure \ref{fig:notation_2}, at the end of the $3^{rd}$ RTT, $W_3^{\prime}$ is set to $W_2$ and $T_3$ is set to $0$. Also,  $W_3$ is set to $W_2(1-\beta)$.

\begin{figure}
  \centering
  \includegraphics[scale=0.35]{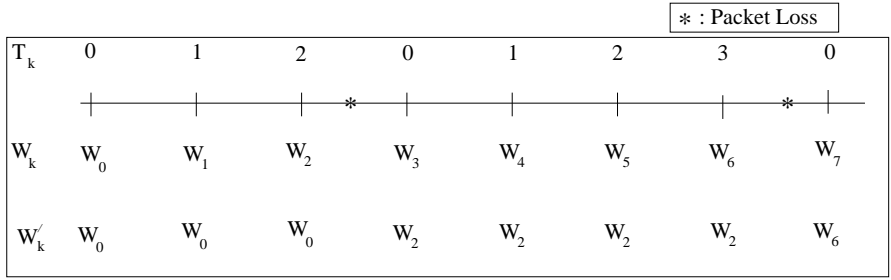}
  \caption{TCP CUBIC: Illustrating $\{W_n, W_n^{\prime}, T_n\}$ processes}
  \label{fig:notation_2}
\end{figure}

The TCP window size is usually restricted by the buffer size available at the receiver. Let the window size $W_n \leq W_{max} < \infty$. Therefore $W_n^{\prime} \leq W_{max}$. Let $T_{max}$ be the maximum time (in multiples of RTT) taken for the TCP CUBIC window size, $W(t)$  to hit $W_{max}$ starting from any initial window size $W_0 \in \{1, 2, \cdots W_{max} \}$. We note that if $T_n$ exceeds $T_{max}$, then $W_n$ as computed by \eqref{eqn:tcpCUBIC_MC_no_loss} exceeds $W_{max}$ and in this case we set $W_n$ to $W_{max}$. Therefore, we can restrict $T_n$ to be in $\{1, 2, \cdots, T_{max}\}$. Thus the process $\{W_n^{\prime}, T_n\}$ forms a finite state discrete time Markov chain. From any state in the state space, a sequence of consecutive packet drops would cause the Markov chain to hit $(1,0)$. Therefore, the state $(1,0)$ can be reached with positive probability from any state in the state space and hence is recurrent. All states that can be reached from $(1,0)$ are recurrent and the remaining states are transient. Also, there is a self loop from state $(1,0)$ to itself. Therefore the process $\{W_n^{\prime}, T_n\}$ is aperiodic with a single positive recurrent communicating class. Hence it has a unique stationary distribution which we denote by $\pi(w,d)$. Also, starting from any initial state the chain converges exponentially to the stationary distribution in total variation.

Let $\mathbb{E}[W]$ denote the mean window size under stationarity. For TCP CUBIC, it can be computed as
\begin{equation}
\label{eqn:EW_CUBIC}
\mathbb{E}[W] = \sum_{1 \leq w \leq W_{max}, d \in \mathcal{D} } W(w,d) \pi(w,d),
\end{equation}
where $\mathcal{D} \subset \{0, 1, \cdots T_{max}\}$ and $W(w,d)$ is given by
\begin{equation}
\label{eqn:Wof_w_and_d}
W(w,d) = \max \{ C \Bigl(R d - \sqrt[3]{\frac{w  \beta}{C}} \Bigr)^3 + w ,  w  (1 - \beta) + 3  \frac{\beta}{2 - \beta}  d \}.
\end{equation}
From \eqref{eqn:tcpCUBIC_MC_no_loss}, we see that the transitions of the process $\{W_n^{\prime}, T_n\}$ (both the next state and the transition probabilities) depend on the RTT of the connection. Therefore the stationary distribution, $\pi(w,d)$ and the mean window size of TCP CUBIC is a function of the RTT of the connection. We denote the relationship between RTT, $R$, packet error rate, $p$ and the mean window size, $\mathbb{E}[W]$ for TCP CUBIC as 
\begin{equation}
\label{eqn:EW_TCP_CUBIC_MC}
\mathbb{E}[W] = f_p(R),
\end{equation}
which is numerically computed using \eqref{eqn:EW_CUBIC} (see Figures \ref{fig:Effect_RTT_TCPCUBIC_1} and \ref{fig:Effect_RTT_TCPCUBIC_2} below). We note that \eqref{eqn:EW_TCP_CUBIC_MC} is not a closed form expression but a numerical evaluation of \eqref{eqn:EW_CUBIC}. We use \eqref{eqn:EW_TCP_CUBIC_MC} in Section \ref{sec:multiple_tcp_models} where we study TCP CUBIC in a multi-hop network where the RTT may not be constant due to queuing in the network. In that case, we use an approximation and replace $R$ by the mean RTT, $\mathbb{E}[R]$, of the flow.

We do not consider the case of non-negligible queuing for TCP CUBIC. However, in Section \ref{sec:multiple_tcp_models}, where we consider TCP connections with non-negligible queuing, we approximate the TCP CUBIC mean window size $\mathbb{E}[W] \approx f_p(\mathbb{E}[R])$, where $\mathbb{E}[R]$ is the mean RTT of the connection.

\subsection{Simulation Results}
In Figures \ref{fig:Effect_WMax_TCPCUBIC_01_2},  \ref{fig:Effect_WMax_TCPCUBIC_02} and \ref{fig:Effect_WMax_TCPCUBIC_03}, we plot the mean window size $\mathbb{E}[W]$ as a function of $W_{max}$ and compare it to ns2 simulations. In our simulations and model, we set $C = 0.4, \beta = 0.3$, which are the values used by the current version of TCP CUBIC \cite{tcp_cubic_code}. The packet sizes are set to $1050$ bytes which is the default value in ns2. The link speeds are set to $1$ Gbps. Each packet can be dropped independently of other packets with probability $p$.  The loss in the network is modelled as random. Hence, in all the simulations, we set the link buffer size to be greater than $W_{max}$ so that there are no buffer drops. In Figure \ref{fig:Effect_WMax_TCPCUBIC_01_2}, we plot results for packet error rates, $0.01$, $0.008$, $0.005$ and $0.003$ and RTT is set to $0.2$ sec (bandwidth delay product $= 23810$ packets). In Figure \ref{fig:Effect_WMax_TCPCUBIC_02}, we plot results for packet error rates, $0.001$, $0.0003$ and $0.0001$ and RTT is set to $0.02$ sec (bandwidth delay product $= 2381$ packets). In Figure \ref{fig:Effect_WMax_TCPCUBIC_03}, we plot results for packet error rates, $5 \times 10^{-5}$, $3 \times 10^{-5}$ and $1 \times 10^{-5}$ and RTT is set to $0.2$ sec (bandwidth delay product $= 23810$ packets). The theoretical and ns2 results differ by $ < 8.5 \%$. The mean window size $\mathbb{E}[W]$ increases monotonically with $W_{max}$. However for large values of $W_{max}$, change in $W_{max}$ has negligible effect on $\mathbb{E}[W]$. This suggests that for large values of $W_{max}$, $\mathbb{E}[W]$ is not affected by $W_{max}$. In the rest of the section, we will be working in the regime where $W_{max}$ has no effect on $\mathbb{E}[W]$.

\begin{figure}
  \centering
  \includegraphics[scale=0.25, trim = 120 5 120 5, clip=true]{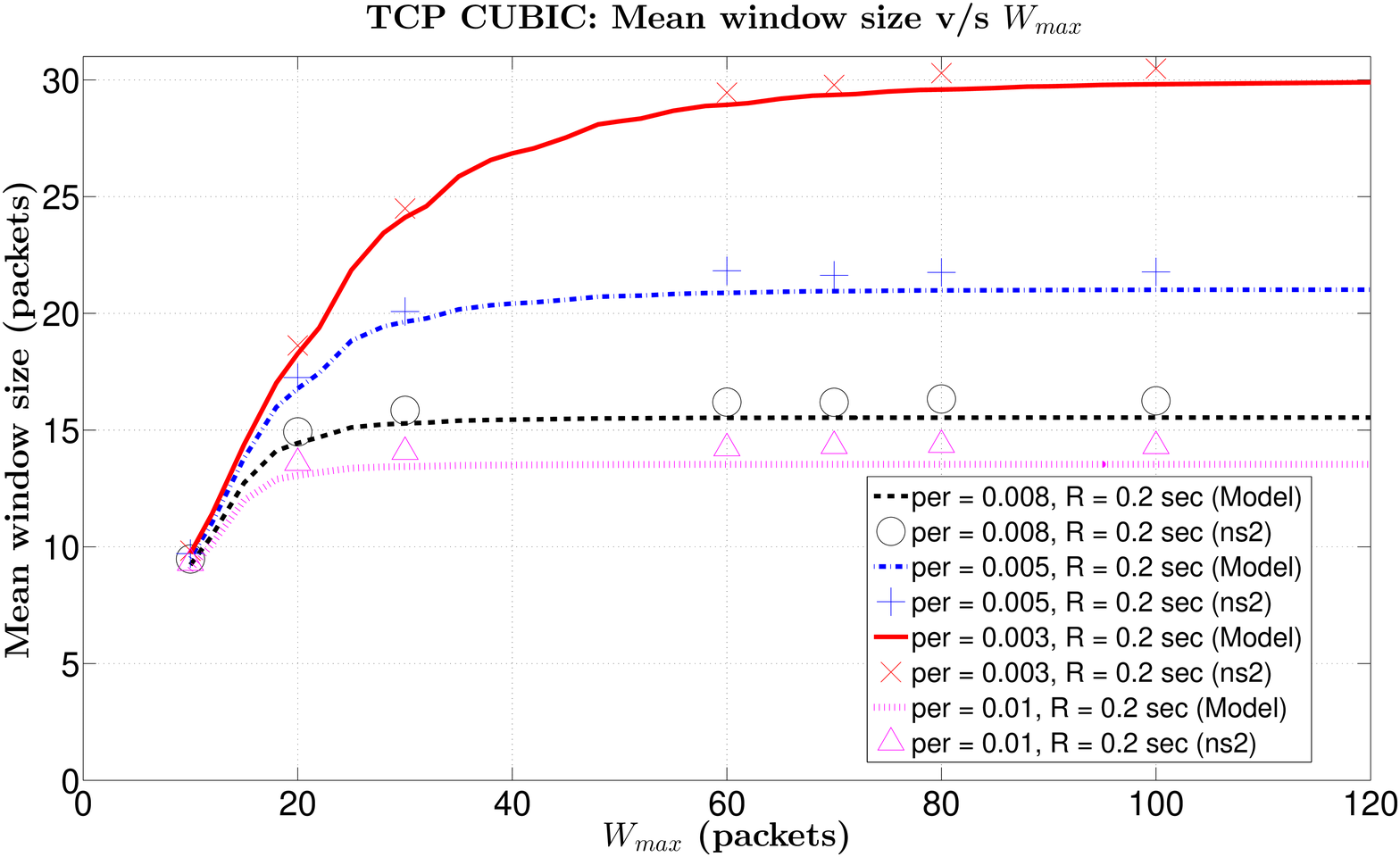}
  \caption{TCP CUBIC: Effect of $W_{max}$ on $\mathbb{E}[W]$.}
  \label{fig:Effect_WMax_TCPCUBIC_01_2}
  \vspace*{-0.3cm}
\end{figure}

\begin{figure}
  \centering
  \includegraphics[scale=0.25, trim = 120 5 120 5, clip=true]{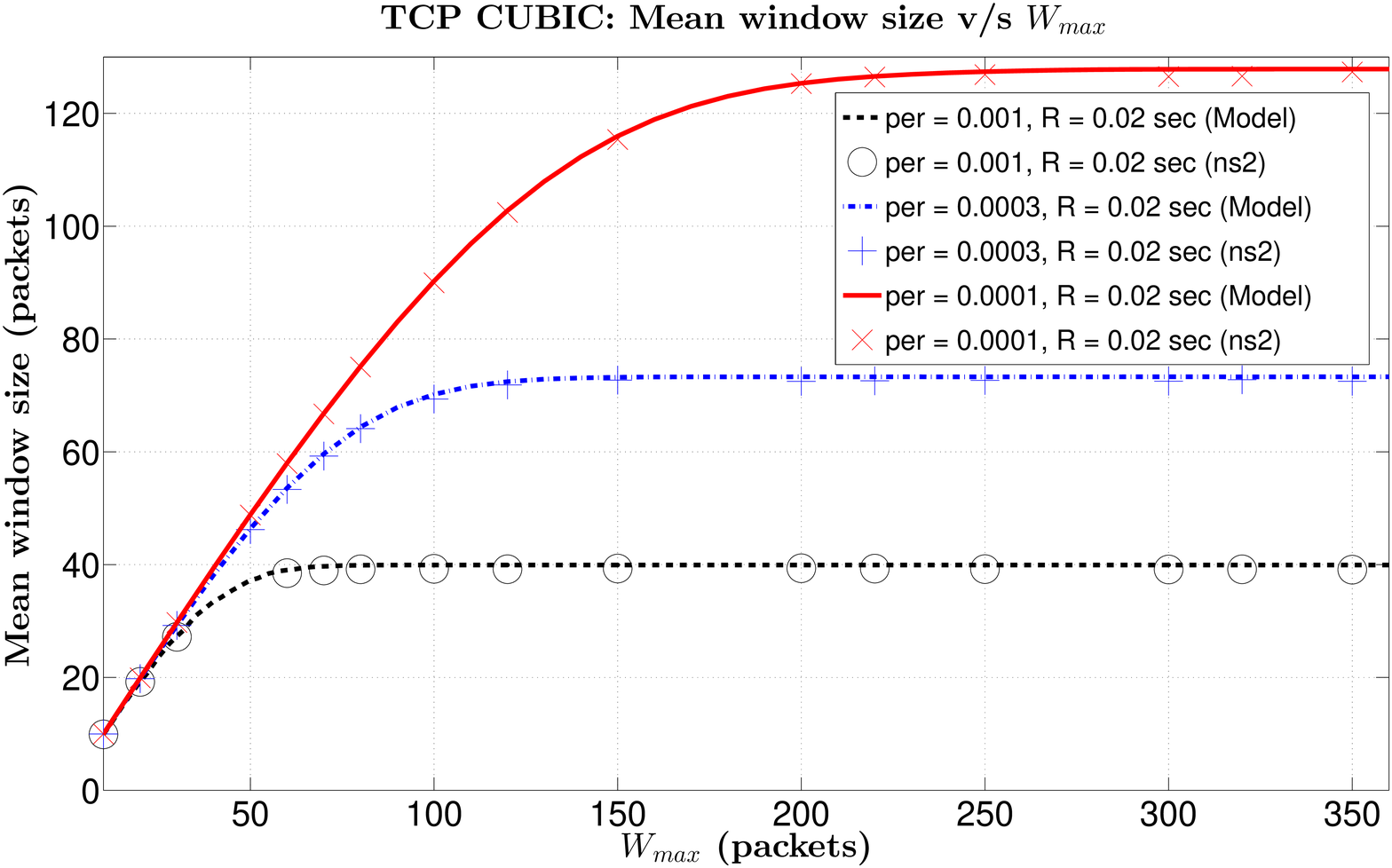}
  \caption{TCP CUBIC: Effect of $W_{max}$ on $\mathbb{E}[W]$.}
  \label{fig:Effect_WMax_TCPCUBIC_02}
  \vspace*{-0.3cm}
\end{figure}

\begin{figure}
  \centering
  \includegraphics[scale=0.25, trim = 80 5 120 5, clip=true]{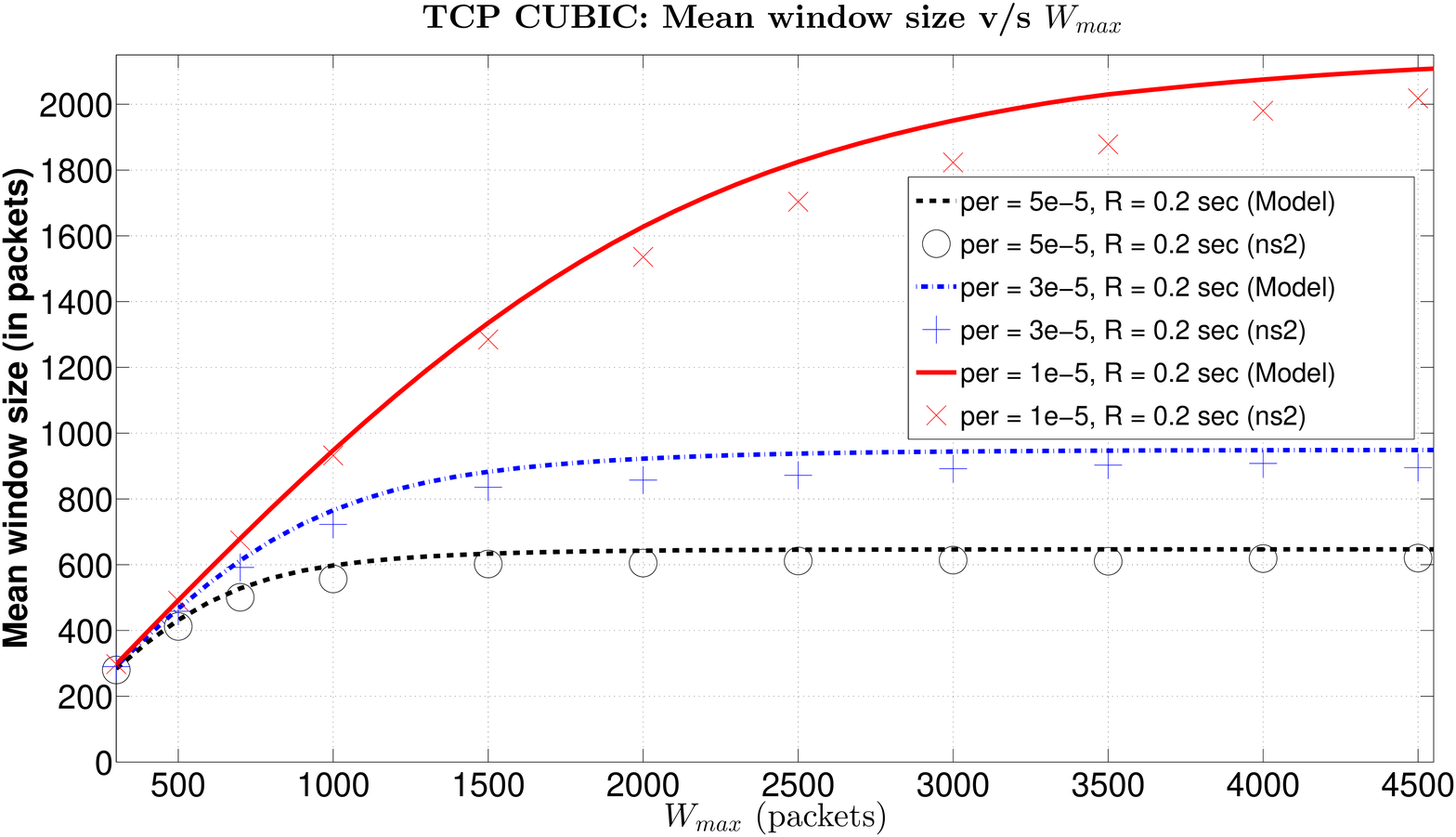}
  \caption{TCP CUBIC: Effect of $W_{max}$ on $\mathbb{E}[W]$.}
  \label{fig:Effect_WMax_TCPCUBIC_03}
  \vspace*{-0.3cm}
\end{figure}


In Figures \ref{fig:Effect_RTT_TCPCUBIC_1} and \ref{fig:Effect_RTT_TCPCUBIC_2}, we plot our results for mean window size for TCP CUBIC as a function of the round trip time (RTT) and compare them to ns2 simulations. The packet sizes are set to  $1050$ bytes. The link speeds are set to $1$ Gbps. The bandwidth-delay product in these simulations range from $1190$ to $59523$ packets. The simulation and theoretical results differ by $ < 8 \%$. We see in these figures that, unlike TCP Reno, the mean window size of TCP CUBIC is a function of the RTT along with the packet error rate. Traditional TCP viz., AIMD TCP (e.g., TCP Reno, TCP New Reno) is known to be unfair to TCP connections with longer RTT. Since the mean window size of CUBIC grows with RTT, TCP CUBIC has better RTT-fairness than AIMD TCP. This feature also makes TCP CUBIC efficient in networks with larger RTT.

\begin{figure}
  \centering
  \includegraphics[scale=0.25, trim = 80 5 120 5, clip=true]{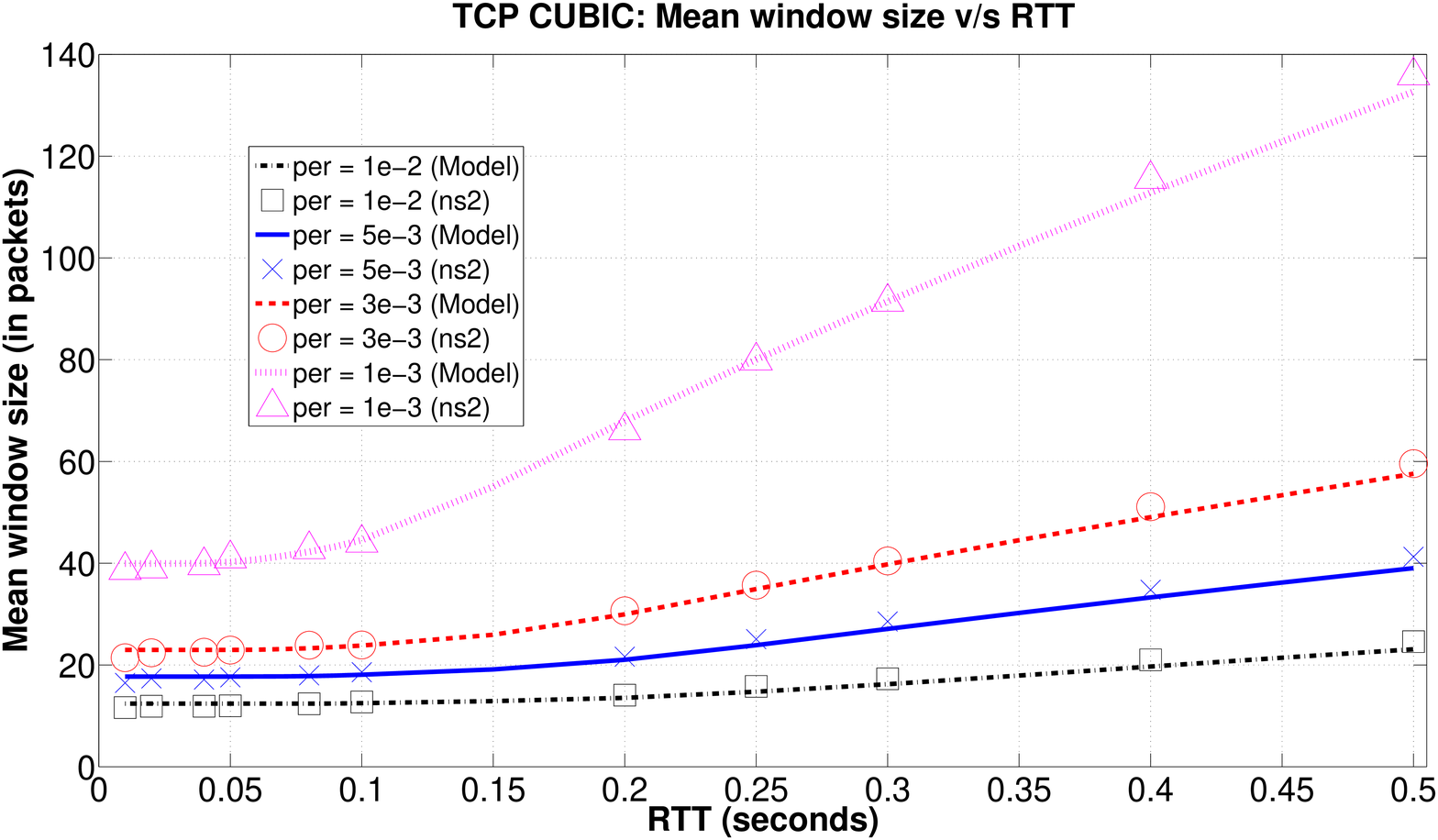}
  \caption{TCP CUBIC: Effect of RTT on $\mathbb{E}[W]$.}
  \label{fig:Effect_RTT_TCPCUBIC_1}
  \vspace*{-0.3cm}
\end{figure}

\begin{figure}
  \centering
  \includegraphics[scale=0.25, trim = 80 5 120 5, clip=true]{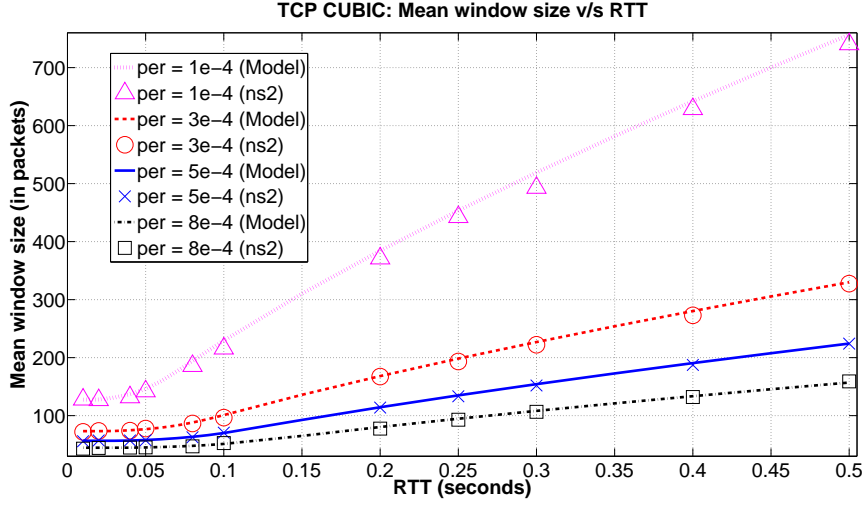}
  \caption{TCP CUBIC: Effect of RTT on $\mathbb{E}[W]$.}
  \label{fig:Effect_RTT_TCPCUBIC_2}
  \vspace*{-0.3cm}
\end{figure}

\section{A Markov Model for TCP Compound}
\label{sec:ctcp_markovmodel}
In this section, we develop Markov models for TCP Compound connections with negligible and non-negligible queuing. As before, we have a single TCP connection. We assume that any packet received can be in error with probability $p$ independent of other packets. Also, we assume that ACKs are not lost. 

TCP Compound is a delay-based congestion control algorithm. When competing with flows using loss-based congestion control algorithms, flows using delay based congestion control algorithms get less than their fair share of capacity. To address this drawback, TCP Compound window has a loss based component besides the delay based component. The delay based component increases rapidly when there is no congestion in the network and decreases when there is congestion. The loss-based component ensures that when there is congestion in the network, TCP Compound flows behave like TCP-Reno, thus getting their fair share of the network capacity. We denote the window size of TCP Compound at the end of the $n^{th}$ RTT by $W_n$ and the delay based and loss based components by $D_n$ and $L_n$ respectively. The window evolution of TCP Compound is given by 

\begin{equation}
\label{eqn:Dn}
D_{n+1} = 
\begin{cases}
D_{n} + (\alpha (W_{n})^{k} - 1)^{+}, \\ \hspace*{0.5cm}\text{ if no loss during the RTT and } Q_{n+1} < \gamma; \\
(D_{n} - \zeta Q_{n+1})^{+}, \\ \hspace*{0.5cm} \text{ if no loss during the RTT and } Q_{n+1} \geq \gamma; \\
\frac{D_{n}}{2}, \text{ if loss is detected};
\end{cases}
\end{equation}

\begin{equation}
\label{eqn:Ln}
L_{n+1} = 
\begin{cases}
L_{n} + 1, \text{ if no loss}; \\
\frac{L_{n}}{2}, \text{ if a loss is detected;}
\end{cases}
\end{equation}

\begin{equation}
\label{eqn:CTCP_Wn}
W_{n+1} = D_{n+1} + L_{n+1};
\end{equation}
where $\alpha$ and $k$ are constant parameters and $\gamma$ is a queuing threshold. We see that the loss-based component $L_n$ has behaviour similar to TCP Reno. The delay based component $D_n$ increases aggressively when there is no queuing but decreases when queuing increases beyond threshold $\gamma$. The variable $Q_{n+1}$ is the estimate of queuing in the network at the end of the $(n+1)^{st}$ RTT. 

\subsection{Markov Model for Negligible Queuing}
\label{subsec:ctcp_markov_noq}
When there is no queuing, the window evolution given in equations \eqref{eqn:Dn}, \eqref{eqn:Ln} and \eqref{eqn:CTCP_Wn} simplifies to
\begin{equation}
\label{eqn:tcpCTCP_Wn}
W_{n+1} = 
\begin{cases}
W_n + 1 + (\alpha W_n^k - 1)^+, \text{ if there is no loss}; \\
\frac{W_n}{2}, \text{ if there is loss.}
\end{cases}
\end{equation}
The probability of no loss between the $n^{th}$ and the $(n+1)^{st}$ RTT is given by $(1-p)^{W_n}$, whereas the probability of loss is given by $1 - (1-p)^{W_n}$. Thus, $\{W_n\}$ is a finite state discrete time Markov chain with $W_n \leq W_{max}$.

The state $1$ can be reached with positive probability after a sequence of consecutive drops from any state in the state space and hence is recurrent. All states that can be reached from $1$ are recurrent and the remaining states are transient. Also, the Markov chain is aperiodic as the state $1$ has a self-loop with self loop transition probability $p$. Hence, the Markov chain has a unique stationary distribution which can be used to compute the average window size,  $\mathbb{E}[W]$. Also, starting from any initial state the chain converges exponentially to the stationary distribution in total variation. 

From \eqref{eqn:tcpCTCP_Wn}, we see that, when queuing is negligible, unlike TCP CUBIC, for TCP Compound the next state and the transition probabilities are independent of the RTT, $R$ of the connection. Therefore, in this case, the TCP Compound average window size is independent of the RTT of the connection.

\subsection{Markov Model for Non-negligible Queuing}
\label{subsec:ctcp_markov_q}
We now consider the case when the queuing is not necessarily negligible. In this case, we have to consider both the delay and the loss based components of the window size $W_n$. Given $D_n$, $L_n$ and $Q_{n+1}$, the components $D_{n+1}$ and $L_{n+1}$ can be computed from equations \eqref{eqn:Dn} and \eqref{eqn:Ln} respectively. We approximate the queue size, $Q_{n+1}$ at the end of the $(n + 1)^{st}$ RTT  by $Q_{n+1} \approx (W_n - \mu \Delta)^+$ where $\mu$ is the bottleneck link capacity in packets/sec and $\Delta$ is the constant propagation delay. This approximation is based on the assumption that,  when the window size is $W$, the instantaneous rate at which packets are sent is $\min \{ \frac{W}{\Delta}, \mu \}$. Thus the queue size is $(W -  \min \{ \frac{W}{\Delta}, \mu \} \Delta)$ $= (W - \mu \Delta)^+$. We validate this approximation by simulation results. The process $\{(D_n, L_n)\}$ forms a Markov chain. The transitions are given by equations \eqref{eqn:Dn} and \eqref{eqn:Ln}. The probability of no loss between the $n^{th}$ and the $(n+1)^{st}$ RTT is given by $(1-p)^{W_n}$, whereas the probability of loss is given by $1 - (1-p)^{W_n}$. We assume that the window size is upper bounded by $W_{max}$. Thus the Markov chain has finite state space.

From any state in the state space, a sequence of consecutive packet drops would cause the Markov chain to hit $(0,1)$. Therefore, the state $(0, 1)$ can be reached from any state in the state space with positive probability. Hence $(0, 1)$ is positive recurrent and all states that can be reached from $(0, 1)$ are positive recurrent and the remaining states are transient. Also the state $(0,1)$ has a self loop with self-loop transition probability $p$. Therefore the Markov chain $\{(D_n, L_n)\}$ is aperiodic and has a unique stationary distribution which we denote by $\pi(d,l)$. Also, starting from any initial state the chain converges exponentially to the stationary distribution in total variation.

Let us denote the stationary value of the $\{W_n\}$ process, where $W_n = D_n + L_n$, by $\overline{W}$. Suppose $R_n$ denotes the RTT of the packets in the window transmitted at the end of $n^{th}$ RTT. We can approximate $R_n$ as $R_n \approx \max \{\Delta, \frac{W_n}{\mu} \}$. This approximation is quite commonly made \cite{Bonald1999}, \cite{Blanc2009}. Let $\overline{R}$ be a random variable with the stationary distribution of $\{R_n\}$. Let us denote the stationary window size for the continuous time window size process, $W(t)$  by $\mathcal{W}$. Let $\hat{R_k}$ be the RTT for the $k^{th}$ packet. We denote the stationary value for $\hat{R_k}$ by $\hat{R}$. The time average window size and the packet average RTT for the connection can be computed using Palm calculus \cite{Asmussen} as 
\begin{equation}
\label{eqn:palm1}
\mathbb{E}[\mathcal{W}] = \frac{\mathbb{E}[\overline{W} \hspace*{0.5mm} \overline{R}]}{\mathbb{E}[\overline{R}]}, \mbox{ }\mbox{ } \mathbb{E}[\hat{R}] = \frac{\mathbb{E}[\overline{W} \hspace*{0.5mm}  \overline{R}]}{\mathbb{E}[\overline{W}]},
\end{equation}
where the averages $\mathbb{E}[\overline{W} \hspace*{0.5mm}  \overline{R}]$, $\mathbb{E}[\overline{R}]$ and $\mathbb{E}[\overline{W}]$ are computed using $\pi(d,l)$. 


\subsection{Simulation Results}
The results obtained using the Markov model for Compound TCP with negligible queuing are compared with ns2 simulations in Figures \ref{fig:Effect_WMax_CTCP_01}, \ref{fig:Effect_WMax_CTCP_02} and \ref{fig:Effect_WMax_CTCP_03}. The packet sizes are set to $1050$ bytes. The link speeds are set to $1$ Gbps so that there is negligible queuing. Each packet can be dropped independently of other packets with probability $p$.  We model the losses in the network as random. Hence, in all the simulations, we set the link buffer size to be greater than $W_{max}$, so that there are no buffer drops. In Figure \ref{fig:Effect_WMax_CTCP_01}, we plot results for packet error rates, $0.008$, $0.005$ and $0.003$. In Figure \ref{fig:Effect_WMax_CTCP_02}, we plot results for packet error rates, $0.001$, $0.0008$ and $0.0003$. In Figure \ref{fig:Effect_WMax_CTCP_03}, we plot results for packet error rates,  $5 \times 10^{-5}$, $3 \times 10^{-5}$ and $1 \times 10^{-5}$. We use two different values for the propagation delay, viz., $0.2$ sec (bandwidth delay product $= 23810$ packets) and $0.02$ sec (bandwidth delay product $= 2381$ packets) for Figures \ref{fig:Effect_WMax_CTCP_01} and \ref{fig:Effect_WMax_CTCP_02} and $0.2$ sec (bandwidth delay product $= 23810$ packets) and $0.1$ sec (bandwidth delay product $= 1190$ packets) for Figure \ref{fig:Effect_WMax_CTCP_03}. In our model, the average window size does not depend on the RTT of the flow. In Figures \ref{fig:Effect_WMax_CTCP_01}, \ref{fig:Effect_WMax_CTCP_02} and \ref{fig:Effect_WMax_CTCP_03} we observe that for the ns2 simulations, the average window size of TCP Compound flows with same packet error rate but different RTT are close to each other. Thus, we see that when the RTT is constant, i.e., queuing is negligible, the average window size of TCP Compound flow does not depend  on its RTT. Also, as in the case of TCP CUBIC, for large values of $W_{max}$, there is no change in the average window size as $W_{max}$ changes. In the rest of the section, we will be working in this regime, i.e., we choose $W_{max}$ large enough so that there is no effect of $W_{max}$ on the average window size.  

\begin{figure}
  \centering
  \includegraphics[scale=0.25, trim = 120 5 120 5, clip=true]{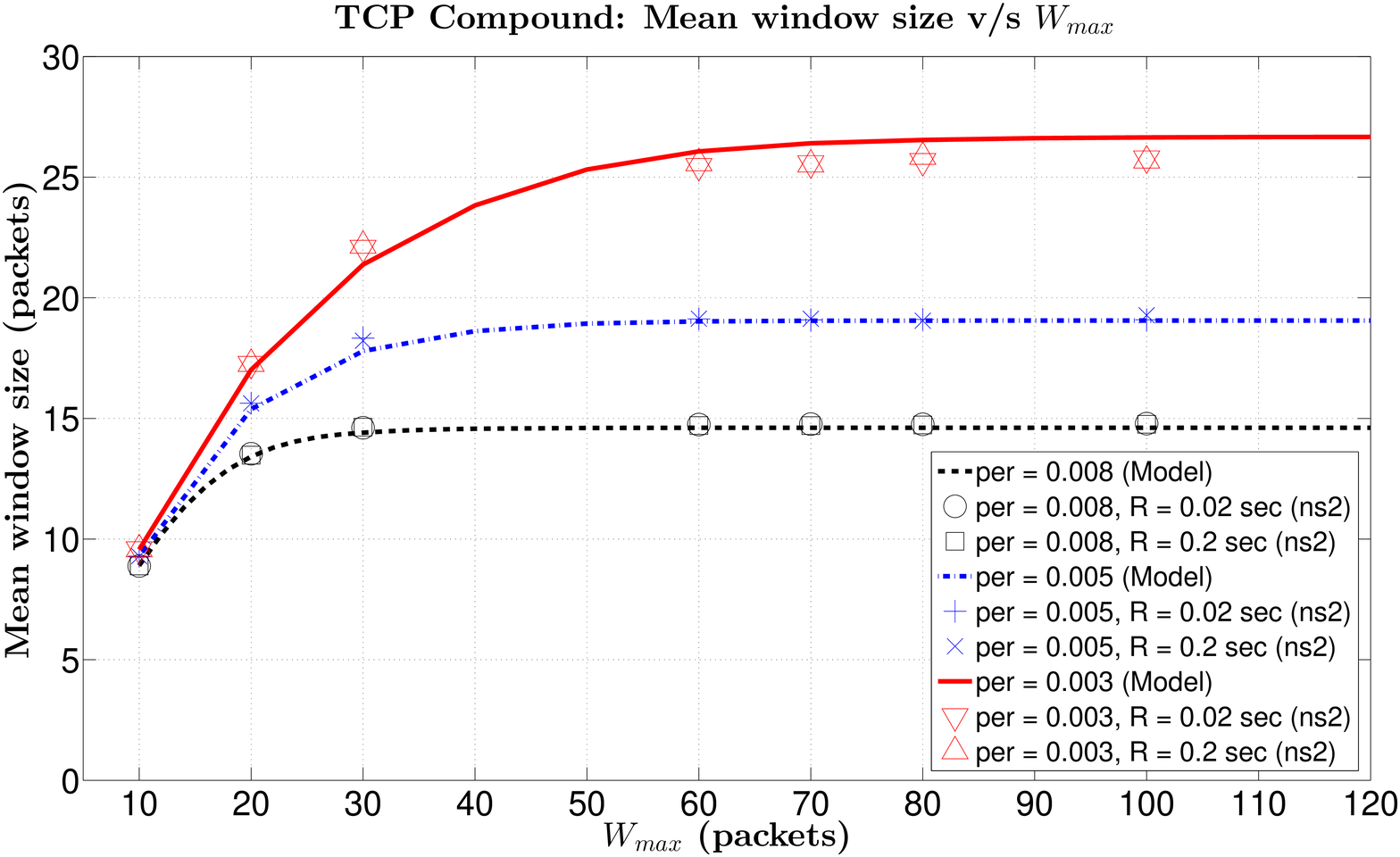}
  \caption{TCP Compound: Effect of $W_{max}$ on $\mathbb{E}[W]$, negligible queuing.}
  \label{fig:Effect_WMax_CTCP_01}
\end{figure}

\begin{figure}
  \centering
  \includegraphics[scale=0.25, trim = 120 5 120 5, clip=true]{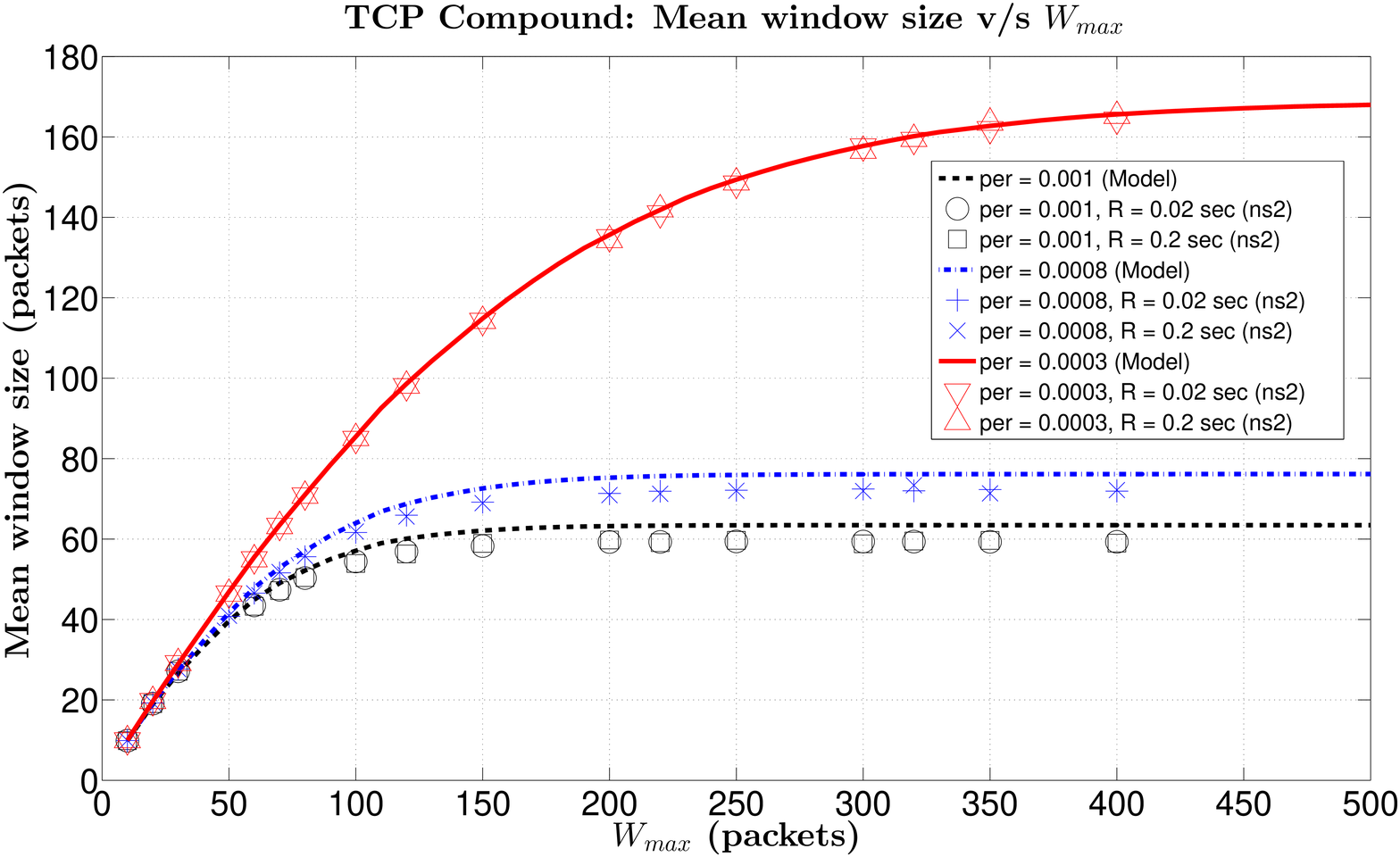}
  \caption{TCP Compound: Effect of $W_{max}$ on $\mathbb{E}[W]$, negligible queuing.}
  \label{fig:Effect_WMax_CTCP_02}
\end{figure}

\begin{figure}
  \centering
  \includegraphics[scale=0.25, trim = 120 5 120 5, clip=true]{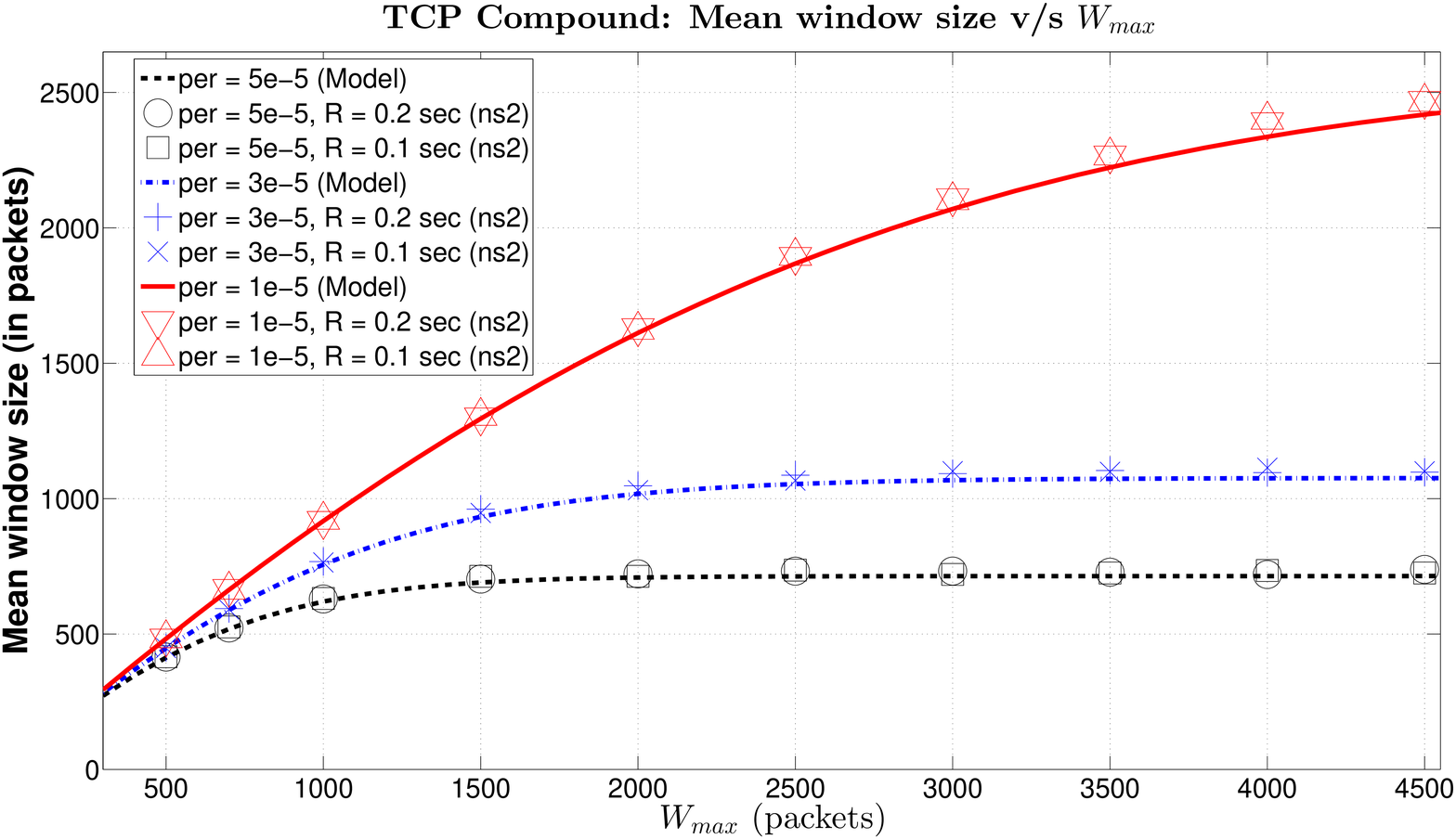}
  \caption{TCP Compound: Effect of $W_{max}$ on $\mathbb{E}[W]$, negligible queuing.}
  \label{fig:Effect_WMax_CTCP_03}
\end{figure}


In Figures \ref{fig:Effect_RTT_CTCP_1} and \ref{fig:Effect_RTT_CTCP2}, we plot the mean window size for TCP Compound with bottleneck link speed $C = 1$ Mbps. In Figures \ref{fig:CTCP_link_utilization_01} and \ref{fig:CTCP_link_utilization}, we plot the normalized link utilization in this case. The packet sizes are set to $1050$ bytes. The bandwidth delay product in these simulations range from $1$ packet to $60$ packets. In Figure \ref{fig:Effect_RTT_CTCP_3} and \ref{fig:CTCP_link_utilization_03}, we plot the mean window size and normalized link utilization with bottleneck link speed $C = 10$ Mbps. The bandwidth delay product in these simulations range from $12$ packet to $595$ packets. The simulation and model results differ by $< 10.5\%$.

In Figures \ref{fig:Effect_RTT_CTCP2} and \ref{fig:Effect_RTT_CTCP_3}, we see that as the round trip propagation delay, $\Delta$ increases, the average window size of TCP Compound increases. However the change in average window size is not due to change in $\Delta$, but due to queuing. Due to the delay based component of Compound TCP, flows which encounter more queuing have smaller average window sizes. For a fixed bottleneck link capacity, flows with larger propagation delays have smaller throughput and hence smaller queues at the bottleneck link (by Little's law). Therefore when there is non-negligible queuing, flows with larger propagation delays have larger average window sizes. When window sizes are small, the delay based component has negligible contribution to the window size and TCP Compound behaves like Reno. This behaviour can be seen in Figure \ref{fig:Effect_RTT_CTCP_1} for packet error rates of $0.01$, $0.008$ and $0.005$ where there is not much change in mean window size as $\Delta$ changes. In Figures \ref{fig:CTCP_link_utilization_01}, \ref{fig:CTCP_link_utilization} and  \ref{fig:CTCP_link_utilization_03}, we see that as propagation delay, $\Delta$ increases, link utilization decreases. For larger $\Delta$, the random packet errors become a bottleneck and adversely affect link utilization.

The average window size for TCP Compound is a function of packet error rate and the queuing that the flow causes in the network. We use the following approximation for TCP Compound average window size
\begin{equation}
\label{eqn:EW_TCP_CTCP_MC}
\mathbb{E}[W] \approx g_p(\mathbb{E}[Q]),
\end{equation}
where $p$ is the packet error rate for the flow and $\mathbb{E}[Q]$ is the average number of packets in the queue. The equation \eqref{eqn:EW_TCP_CTCP_MC} is not a closed form expression but is obtained numerically.

\begin{figure}
  \centering
  \includegraphics[scale=0.25, trim = 120 5 120 5, clip=true]{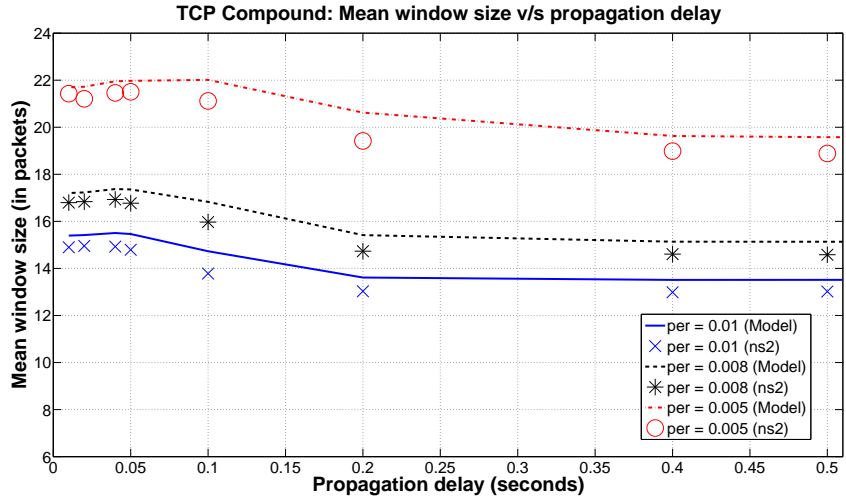}
  \caption{TCP Compound: Effect of change in BDP on $\mathbb{E}[W]$.}
  \label{fig:Effect_RTT_CTCP_1}
\end{figure}

\begin{figure}
  \centering
  \includegraphics[scale=0.25, trim = 120 5 120 5, clip=true]{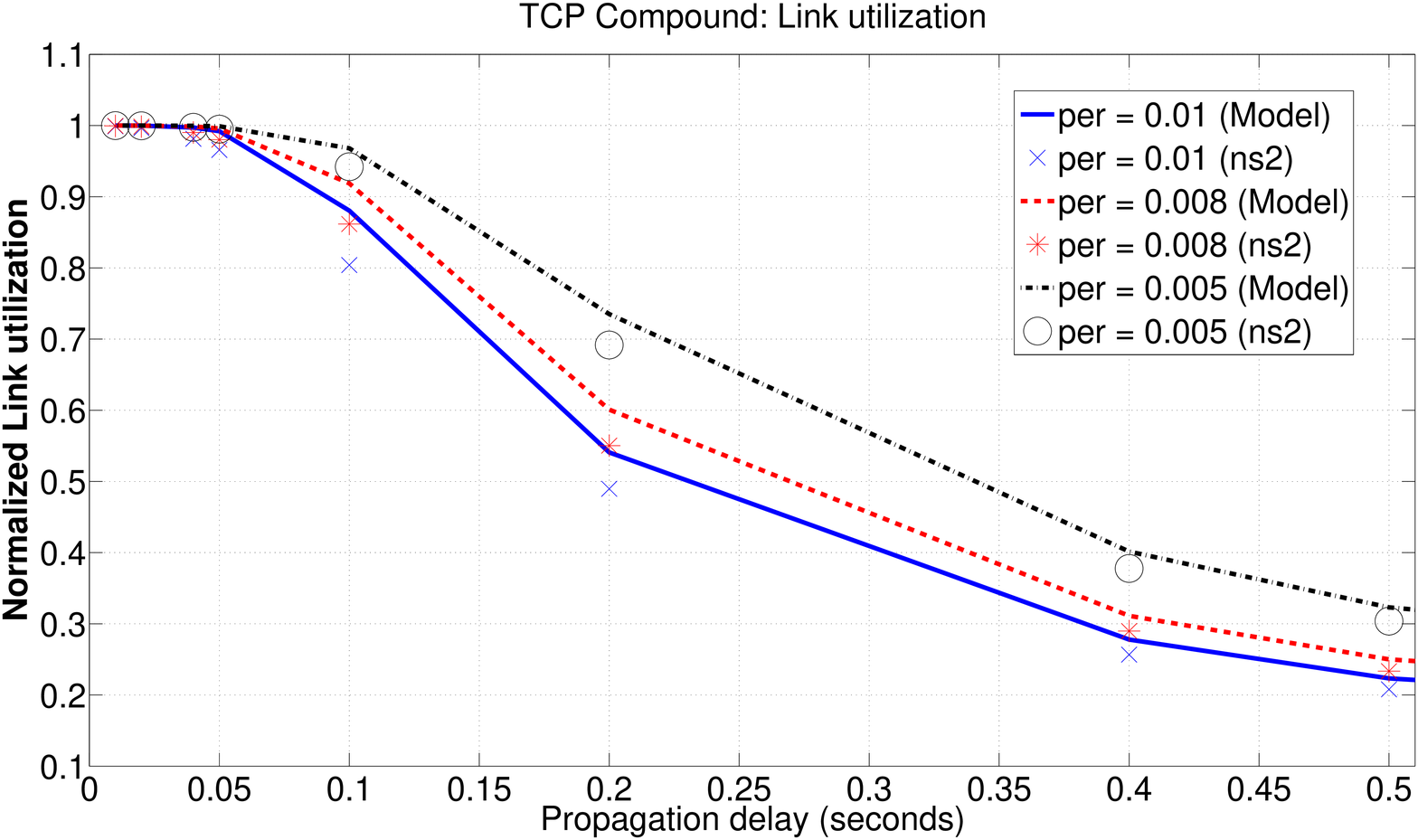}
  \caption{TCP Compound: link utilization.}
  \label{fig:CTCP_link_utilization_01}
\end{figure}

\begin{figure}
  \centering
  \includegraphics[scale=0.25, trim = 120 5 120 5, clip=true]{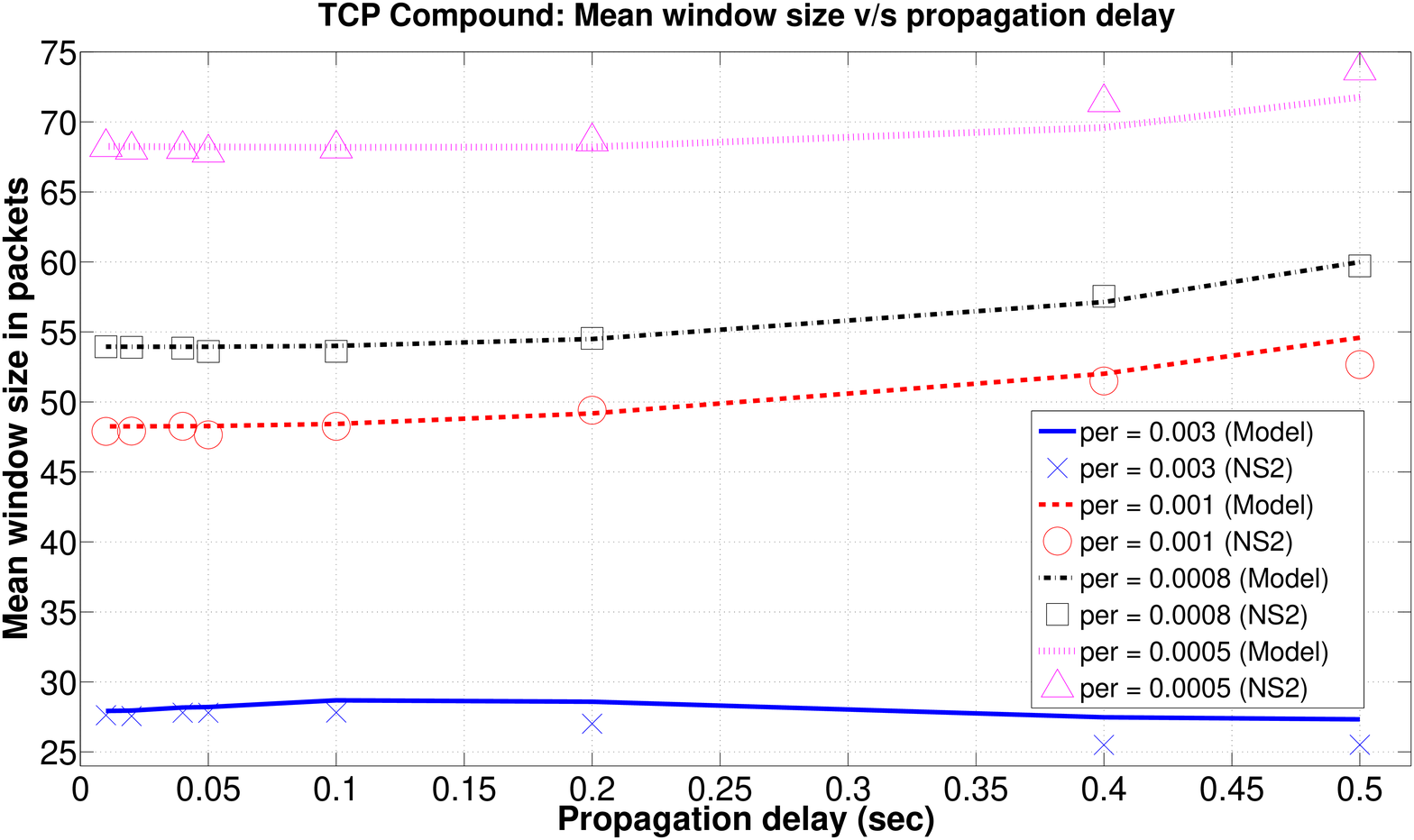}
  \caption{TCP Compound: Effect of change in BDP on $\mathbb{E}[W]$.}
  \label{fig:Effect_RTT_CTCP2}
\end{figure}

\begin{figure}
  \centering
  \includegraphics[scale=0.25, trim = 120 5 120 5, clip=true]{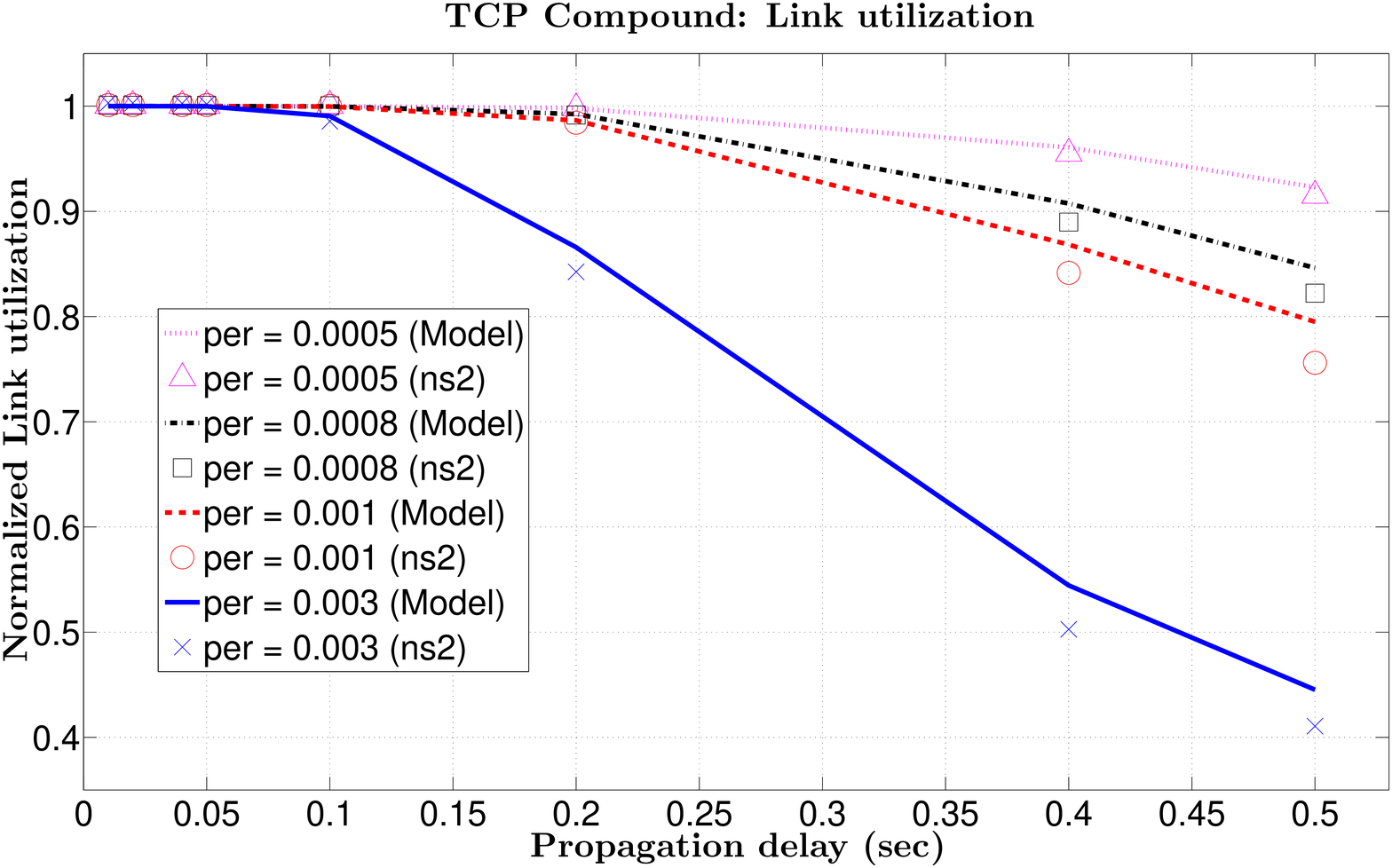}
  \caption{TCP Compound: link utilization.}
  \label{fig:CTCP_link_utilization}
\end{figure}

\begin{figure}
  \centering
  \includegraphics[scale=0.25, trim = 120 5 120 5, clip=true]{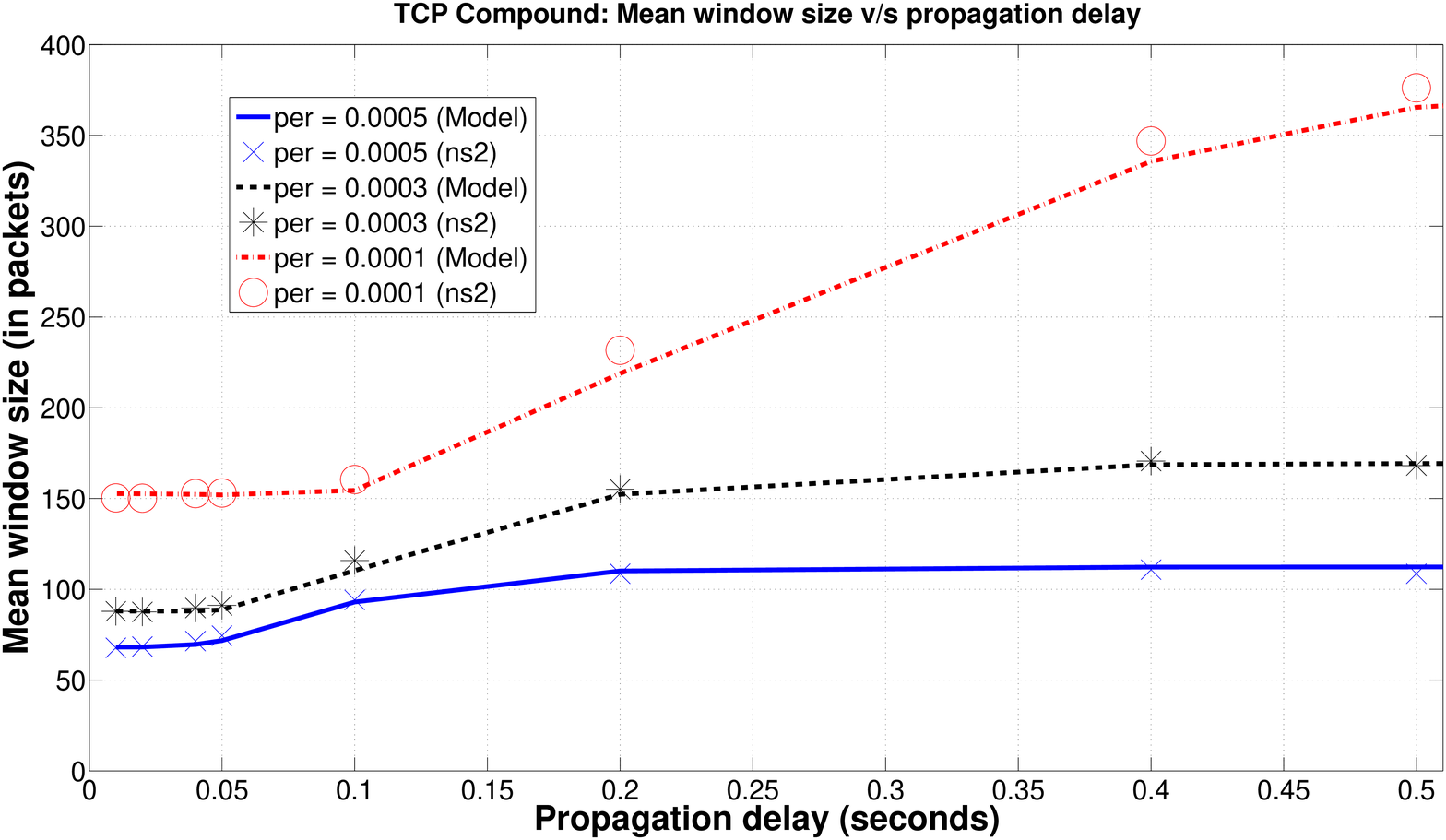}
  \caption{TCP Compound: Effect of change in BDP on $\mathbb{E}[W]$.}
  \label{fig:Effect_RTT_CTCP_3}
\end{figure}

\begin{figure}
  \centering
  \includegraphics[scale=0.25, trim = 120 5 120 5, clip=true]{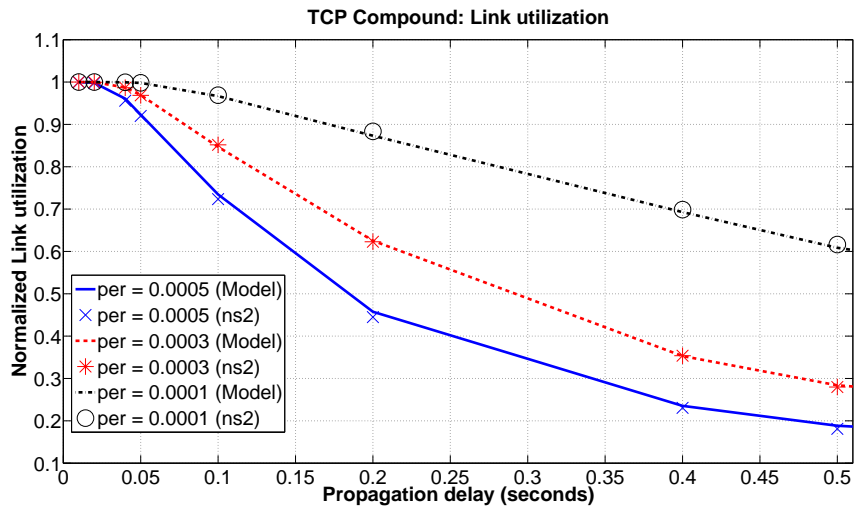}
  \caption{TCP Compound: link utilization.}
  \label{fig:CTCP_link_utilization_03}
\end{figure}


\section{Computational Complexity}
We note that Markovian models for TCP CUBIC and TCP Compound are also developed in \cite{Bao2010} and \cite{Blanc2009}. These papers look at the window sizes at the drop epochs and show that this process forms a discrete time Markov chain. Our approach looks at the window sizes at every RTT. The advantage of our approach is that it has reduced complexity. In our model every single-step transition of the Markov chain can lead to one of at most two states since a loss event or no-loss event can give us two potential next states. However if we consider the window size at drop epochs, any state in the state space could potentially be the next state in a single step transition. 

Assuming that the TCP Compound Markov chain in \cite{Blanc2009} has $N$ states, computation of the steady state probabilities, $\pi$ by use of iteration on $\pi = \pi P$ (where $P$ is the transition probability matrix), requires $O(N^2)$ steps (per iteration) for the model in \cite{Blanc2009}, whereas with our model it reduces to $O(N)$ steps.

For the TCP CUBIC model, assume that $W_{max} \leq N$. Then computation of the steady state probabilities, requires $O(N^2)$ steps (per iteration) for the model in \cite{Bao2010}. The Markov model that we have developed for TCP CUBIC requires $O(MN)$ steps (per iteration)  where $M$ is the maximum value that $T_n$ in equation \eqref{eqn:tcpCUBIC_MC_no_loss} takes. When window sizes are large, the TCP CUBIC mode in \eqref{eqn:tcpCUBIC_MC_no_loss} is dominant. In this case $\max \{T_n\}$ is $O(N^{\frac{1}{3}})$. Thus the Markov model for TCP CUBIC requires $O(N^{\frac{4}{3}})$ steps (per iteration). Hence our Markov models have lesser computational requirements. This can significantly help in computing the stationary distributions and $\mathbb{E}[W]$ for these flows because $N$ can be quite large. In \cite{Ha2008} and \cite{Tan2006Infocom}, the average window sizes go up to $10^6$ which would require $W_{max}$ to be of that order. In such cases, the computational savings will be huge.

\section{Multiple TCPs Through a General Network}
\label{sec:multiple_tcp_models}
In this section we use the models for a single TCP connection for TCP CUBIC and TCP Compound, that we described in Sections \ref{sec:cubic_markovmodel} and \ref{sec:ctcp_markovmodel} respectively, to study the behaviour of these TCP variants in the general network of Section \ref{sec:system_model} with multiple bottleneck links. We will also include TCP New-Reno connections for which the theoretical models are available in \cite{Padhye2000}, \cite{Mathis1997}. Now, the probability of error $p_r$ for flow $r$ will represent the end-to-end probability of packet errors which is seen by the TCP connection $r$. This may be the cumulative effect of packet overflows in the network and channel errors on wireless links on the route of flow $r$. We will use two techniques to include the effect of different interacting queues to obtain the throughput of different TCP flows.  The first technique approximates the behaviour of connecting links by M/G/1 queues and the second uses an optimization approach.

\subsection{The M/G/1 Approximation Method}
\label{sec:MG1_approx}
The main effect different TCP flows passing through the network have on each other is through the queuing delays they cause to each other which then affect their end-to-end RTT. We now describe an approximation to estimate queuing at the bottleneck links.

For a flow $r \in \mathcal{R}$, let $\mathbb{E}[W_r]$ be its average window size, $\lambda_r$ be its throughput (in packets/sec), $\mathbb{E}[R_r]$ its average RTT (including queuing delays) and $\mathbb{E}[s_r]$ the mean packet length in bits. We denote  the average queue length (in packets, of all connections passing through it, excluding the one being serviced) at link $l \in \mathcal{L}$ by $\mathbb{E}[Q_l]$ and the capacity of the link by $C_l$ (in bps). The average RTT for flow $r$ is given by Little's law, as 
\begin{equation} 
\label{eqn:RTT_General_1}
\begin{aligned}
\mathbb{E}[R_r] = \Delta_r & + \Bigl( \displaystyle{\sum_{l: A(l, r) = 1 } \frac{\mathbb{E}[Q_{l}]}{\sum_{r': A(l,r') = 1} \lambda_{r'}}} \Bigr)  \\
& + \Bigl( \displaystyle{\sum_{l: B(l, r) = 1 } \frac{\mathbb{E}[Q_{l}]}{\sum_{r': A(l,r') = 1} \lambda_{r'}}} \Bigr).
\end{aligned}
\end{equation}
The first term in RHS of \eqref{eqn:RTT_General_1} is the constant delay (propagation delay and transmission time), the second term denotes the sum of mean sojourn times for TCP packets on its route and the third term accounts for the total mean sojourn time for the ACK packets on their route. We will assume that the ACKs do not contribute  to the queuing, i.e., queuing at the bottleneck links is only caused by TCP packets. This is a reasonable assumption since typically ACK packets are much smaller than TCP packets.

We approximate the behaviour of the links by M/G/1 queues. The mean sojourn time at a link can then be found using the Pollaczek-Khinchine formula \cite{Bertsekas92}. The average queue length, at link $l$ is 
\begin{equation}
\label{eqn:EQ_MG1}
\mathbb{E}[Q_l] =  \frac{ (\lambda_l)^2 \mathbb{E}[s_l^2] } {2 C_l^2 (1 - \rho_l)},
\end{equation}
where $\lambda_l$ is the overall arrival rate at link $l$, $\mathbb{E}[s_l^2]$ is the second moment of the packet lengths at link $l$ and $\rho_l$ is the utilization factor of link $l$:

\begin{equation}
\label{eqn:tar_link_l}
\lambda_l = \sum_{r: A(l,r) = 1} \lambda_r,
\end{equation}

\begin{equation}
\label{eqn:packet_size_moments}
\mathbb{E}[s_l^2] = \sum_{r: A(l,r) = 1} \frac{\lambda_r}{\sum_{r: A(l,r) = 1} \lambda_r} \mathbb{E}[s_r^2],
\end{equation}

\begin{equation}
\label{eqn:rho_link_l}
\rho_l = \sum_{r: A(l,r) = 1}  \frac{\lambda_r \mathbb{E}[s_r]}{C_l}.
\end{equation}

The mean window size $\mathbb{E}[W_r]$ for connection $r$, its throughput, $\lambda_r$ and its RTT, $\mathbb{E}[R_r]$ are related by Little's law as, 
\begin{equation}
\label{eqn:littles_law}
\lambda_r = \frac{(1-p_r) \mathbb{E}[W_r] }{\mathbb{E}[R_r]}.
\end{equation}

We solve \eqref{eqn:RTT_General_1}, \eqref{eqn:EQ_MG1} and \eqref{eqn:littles_law} simultaneously for the unknowns $\overline{\lambda} = \{\lambda_r: r \in  \mathcal{R} \}$, $\{\mathbb{E}[W_r]: r \in  \mathcal{R} \}$, $\{\mathbb{E}[R_r]: r \in  \mathcal{R} \} $ and $\{\mathbb{E}[Q_l]: l \in  \mathcal{L} \} $ using \eqref{eqn:EW_TCP_CUBIC_MC} for TCP CUBIC, \eqref{eqn:EW_TCP_CTCP_MC} for TCP Compound and 
\begin{equation}
\label{eqn:EW_reno}
E[W_{reno}] = \frac{1.31}{\sqrt{p}}
\end{equation}
for TCP New Reno from \cite{Mathis1997}. 

We illustrate our approach briefly in Algorithm \ref{alg:MG1_approx}. In Algorithm \ref{alg:MG1_approx}, we iteratively find solution to the system of non-linear equations, \eqref{eqn:RTT_General_1}--\eqref{eqn:littles_law} using Broyden's algorithm \cite{Broyden1965} which is an efficient well known quasi-Newton method. The matrix $J$ is the Jacobian matrix of the fixed point equation, denoted by $f$. The parameter $t$ is identified using binary search along direction $d$. The average RTT, $\{\mathbb{E}[R_r]: r \in \mathcal{R}\}$ of the flows are computed using \eqref{eqn:RTT_General_1}. The average window sizes are computed using results from Sections \ref{sec:cubic_markovmodel} and \ref{sec:ctcp_markovmodel}. The procedure gives us the TCP performance measures, $\{\mathbb{E}[W_r], \lambda_r, \mathbb{E}[R_r]: r \in \mathcal{R}\}$.

The M/G/1 approximation technique we describe here is an approximation as the arrival process for TCP packets at a queue is not Poisson. This approximation helps us to easily compute the queuing delay at each link without prior identification of the bottleneck links. However the TCP dynamics have also been captured in equation \eqref{eqn:littles_law} via mean window lengths. Through simulation results described in Section \ref{sec:simulation_results}, we will see that these simplifying approximations are reasonably accurate.

\begin{algorithm}
  \begin{algorithmic}
    \STATE INPUT: $\{p_r, \Delta_r, \mathbb{E}[s_r]: r \in \mathcal{R}\}, \{C_l: l \in \mathcal{L}\}, A, B, \epsilon$
    \STATE OUTPUT: $\{\lambda_r, \mathbb{E}[W_r], \mathbb{E}[R_r]: r \in \mathcal{R}\}$
    \STATE Initialize $\overline{\lambda} = \{\lambda_r: r \in \mathcal{R}\}$
    \STATE Compute $ \{ \mathbb{E}[R_r](\overline{\lambda})$, $\mathbb{E}[W_r](\overline{\lambda}) : r \in \mathcal{R}\}$
    \STATE Compute 
    \begin{equation*}
    f(\overline{\lambda}) = \biggl\{ \lambda_r  - \frac{(1 - p_r)\mathbb{E}[W_r](\overline{\lambda})}{\mathbb{E}[R_r](\overline{\lambda})} \biggr\}_{r \in \mathcal{R}}
    \end{equation*}
    \STATE Compute Jacobian matrix, $J$ for $f$
    \STATE $H = J^{-1}$
    \WHILE {$\norm{f(\overline{\lambda})}_2 > \epsilon$}
    \STATE Set $d = - H f(\overline{\lambda})$     
    \STATE Choose $t$ such that $\norm{f(\overline{\lambda} + d t)}$ is minimized
    \STATE Set $\overline{\lambda} = \overline{\lambda} + d t$
    \STATE Compute $ \{ \mathbb{E}[R_r](\overline{\lambda})$, $\mathbb{E}[W_r](\overline{\lambda}) : r \in \mathcal{R}\}$
    \STATE Recompute $H$ using \cite{Broyden1965}
    \ENDWHILE
    \RETURN $\{\mathbb{E}[W_r], \lambda_r, \mathbb{E}[R_r]: r \in \mathcal{R}\}$
      \end{algorithmic}
  \caption{Throughput via M/G/1 approximation.}
  \label{alg:MG1_approx}
\end{algorithm}

\subsection{The Optimization Approach}
\label{sec:optimization}
In this section, we describe an optimization approach to compute the average throughput in the network. We will use the expressions \eqref{eqn:EW_TCP_CUBIC_MC} for TCP CUBIC, \eqref{eqn:EW_TCP_CTCP_MC} for TCP Compound and \eqref{eqn:EW_reno} for TCP Reno and use them in an optimization program. This approach is similar to the one used in \cite{Altman2002}. However, \cite{Altman2002} assumes that there is negligible queuing in the network and has only TCP Reno whereas in our model the queuing may be non-negligible and also has TCP Compound and TCP CUBIC connections.  

We will first consider a simple approximation when there is only one bottleneck queue. Suppose there are multiple TCP connections going through a single bottleneck link router. The TCP throughput for the different connections can be computed using the following approximation. Let $C$ be the bottleneck link capacity (in bps) and let $D$ be the average queuing delay at the bottleneck link. We can compute the average window size for TCP CUBIC, TCP Compound and New Reno using equations \eqref{eqn:EW_TCP_CUBIC_MC}, \eqref{eqn:EW_TCP_CTCP_MC} and  \eqref{eqn:EW_reno} respectively. If we have $\sum_{r} \frac{(1 - p_r)\mathbb{E}[W_r]\mathbb{E}[s_r]}{\Delta_r} \leq C$, we set the throughput (in packets/sec) for connection $r$ as $\lambda_r = \frac{(1 - p_r) \mathbb{E}[W_r]}{\Delta_r}$ and $M=0$, else we find $M$ such that
\begin{equation}
\sum_{r} \frac{(1 - p_r)\mathbb{E}[W_r]\mathbb{E}[s_r]}{\Delta_i + M} = C.
\end{equation}

In the above case we have assumed that the bottleneck link is either operating at full capacity or has zero queuing. Extending this for the multihop network, we assume that if $\sum_{r: A(l,r) = 1} \lambda_r \mathbb{E}[s_r] < C_l$ then $\mathbb{E}[Q_l] = 0$. Therefore, for each $r \in \mathcal{R}$ and each $l \in \mathcal{L}$, we have
\begin{equation}
\label{eqn:LL}
(1 - p_r)\mathbb{E}[W_r] = \lambda_{r} (\Delta_r + \sum_{l: A(l,r) = 1} M_l), 
\end{equation}
and 
\begin{equation}
\label{eqn:slackness}
M_l (C_l - \sum_{r: A(l,r) = 1}\lambda_r E[s_r] ) = 0, 
\end{equation}
where $M_l$ is the mean sojourn time at link $l$. The equation \eqref{eqn:LL} comes from the Little's law, whereas \eqref{eqn:slackness} is a restatement of our assumption. Besides \eqref{eqn:LL} and \eqref{eqn:slackness}, we require 
\begin{equation}
\label{eqn:capacity_constraints}
\sum_{r: A(l,r) = 1}\lambda_r \mathbb{E}[s_r] \leq C.
\end{equation}

Assuming $\mathbb{E}[W_r]$ is known, we can solve \eqref{eqn:LL}--\eqref{eqn:capacity_constraints} (for the unknowns $\{\lambda_r \in \mathcal{R}\}$ and $\{M_l \in \mathcal{L}\}$) using Algorithm \ref{alg:KKT_solver}. The algorithm stops if the successive approximations to $M^*$ are $\epsilon-$close. The parameter $d$ in the algorithm is the step size which dictates the speed of convergence.
\begin{algorithm}
  \begin{algorithmic}
    \STATE INPUT: $\{\mathbb{E}[W_r], p_r, \Delta_r, \mathbb{E}[s_r]: r \in \mathcal{R}\}, \{C_l: l \in \mathcal{L}\}, A, \epsilon$.
    \STATE OUTPUT: $M^{*} = \{M_l^*: l \in \mathcal{L}\}, \lambda^* = \{\lambda_r^*: r \in \mathcal{L}\}$.
    \STATE Initialize $M^{1} = \{M_l^1: l \in \mathcal{L}\}$.
    \STATE Initialize $M^{2} = \{M_l^2: l \in \mathcal{L}\}$.
    \STATE Set step size $d$.
    \REPEAT
    \STATE $M^{1} = M^{2}$.
    \FORALL{ $r \in \mathcal{R}$}
    \STATE $\lambda_r = \frac{(1 - p_r)\mathbb{E}[W_r]}{\Delta_r + \displaystyle{\sum_{j: A(j,r) = 1} M_j}}$. 
    \ENDFOR
    \FORALL{ $l \in \mathcal{L}$}
    \STATE $M_l^2 = (M_l^1 - d(C_l - \displaystyle{\sum_{r: A(l,r) = 1}\lambda_r \mathbb{E}[s_r]))^+}$
    \ENDFOR
    \UNTIL{$\norm{M^{1} - M^{2}} \leq \epsilon$}
    \STATE $M^* = M^{2}$
    \STATE $\lambda^* = \{\lambda_r: r \in \mathcal{R}\}$
    \RETURN $M^*, \lambda^*.$
      \end{algorithmic}
  \caption{Solution to \eqref{eqn:LL}--\eqref{eqn:capacity_constraints}.}
  \label{alg:KKT_solver}
\end{algorithm}

The average window size expressions (as computed using the Markov chain models in Sections \ref{sec:cubic_markovmodel} and \ref{sec:ctcp_markovmodel}) are a function of the average RTT (for TCP CUBIC) and average queue size (for TCP Compound) of the connections. Thus, we need to solve for the average window size equations, and equations \eqref{eqn:LL} and \eqref{eqn:slackness} simultaneously. As in Section \ref{sec:MG1_approx}, we use Broyden's algorithm for simultaneously solving these equations and obtaining the throughput of the different connections. We describe our approach in brief in Algorithm \ref{alg:optimization_approach}. In the M/G/1 approach, we initialize Algorithm \ref{alg:MG1_approx} with a guess of the throughput of the different flows. Here, we initialize Algorithm \ref{alg:optimization_approach}, with a guess of $\overline{X}$ which consists of $\{\mathbb{E}[R_r]: r \in \mathcal{R} \text{ and } r \text{ is a CUBIC connections} \}$ and $\{\mathbb{E}[Q_r]: r \in \mathcal{R} \text{ and } r \text{ is a Compound connections} \}$ so that we can compute $\{ \mathbb{E}[W_r] : r \in \mathcal{R} \}$. We then compute the throughputs, $\{\lambda_r: r \in \mathcal{R}\}$, of the different connections and the average queue sizes, $\{M_l: l \in \mathcal{L}\}$, using Algorithm \ref{alg:KKT_solver}. Since in this approach, we did not account for the queuing incurred by the ACK packets, we need to update, $\Delta_r$ and hence $\{ \mathbb{E}[R_r] : r \in \mathcal{R} \}$ iteratively using the queuing estimates $\{ M_l: l \in \mathcal{L} \}$. We then iteratively find solution to the equation $g(\overline{X}) = 0$ (see \eqref{eqn:gofX_optimization}) using Broyden's algorithm \cite{Broyden1965}. The procedure gives us the TCP performance measures, $\{\mathbb{E}[W_r], \lambda_r, \mathbb{E}[R_r]: r \in \mathcal{R}\}$.

\begin{algorithm}
  \begin{algorithmic}
    \STATE INPUT: $\{p_r, \Delta_r, \mathbb{E}[s_r]: r \in \mathcal{R}\}, \{C_l: l \in \mathcal{L}\}, A, B, \epsilon$
    \STATE OUTPUT: $\{\lambda_r, \mathbb{E}[W_r], \mathbb{E}[R_r]: r \in \mathcal{R}\}$
    \STATE Initialize $\overline{X} = \{\mathbb{E}[R_s], \mathbb{E}[Q_t]: s,t \in \mathcal{R} \}$, where connections $s$, $t$  index TCP CUBIC and TCP Compound connections respectively.
    \STATE Compute $\{\mathbb{E}[W_r](\overline{X}) : r \in \mathcal{R}\}$
    \STATE Compute $\{ \lambda_r : r \in \mathcal{R}\}$, $\{ M_l : l \in \mathcal{L}\}$  using Algorithm \ref{alg:KKT_solver}
    \STATE Update $\{\Delta_r, \mathbb{E}[R_r] : r \in \mathcal{R}\}$ using $\{ M_l : l \in \mathcal{L}\}$
    \STATE Compute 
    \begin{equation}
    \label{eqn:gofX_optimization}
    g(\overline{X}) = \biggl\{ \lambda_r(\overline{X})  - \frac{(1 - p_r)\mathbb{E}[W_r](\overline{X})}{\mathbb{E}[R_r]} \biggr\}_{r \in \mathcal{R}}
    \end{equation}
    \STATE Compute Jacobian matrix, $J$ for $g$
    \STATE $H = J^{-1}$
    \WHILE {$\norm{g(\overline{X})}_2 > \epsilon$}
    \STATE Set $d = - H f(\overline{\lambda})$     
    \STATE Choose $t$ such that $\norm{g(\overline{X} + d t)}$ is minimized
    \STATE Set $\overline{X} = \overline{X} + d t$
    \STATE Compute $\{\mathbb{E}[W_r](\overline{X}) : r \in \mathcal{R}\}$
    \STATE Compute $\{ \lambda_r : r \in \mathcal{R}\}$ using Algorithm \ref{alg:KKT_solver}
    \STATE Update $\{\Delta_r, \mathbb{E}[R_r] : r \in \mathcal{R}\}$ using $\{ M_l : l \in \mathcal{L}\}$
    \STATE Recompute $H$ using \cite{Broyden1965}
    \ENDWHILE
    \RETURN $\{\mathbb{E}[W_r], \lambda_r, \mathbb{E}[R_r]: r \in \mathcal{R}\}$
      \end{algorithmic}
  \caption{Throughput via Optimization Approach.}
  \label{alg:optimization_approach}
\end{algorithm}

The optimization problem, provided in Proposition \ref{prop:optimization_trick_statement} is another way to solve \eqref{eqn:LL}--\eqref{eqn:capacity_constraints}. In \cite{Massoulie2002}, a similar optimization approach was used to obtain the rates attained by congestion control algorithms with fixed end-to-end window sizes in a network of queues which use FIFO scheduling. We use the following optimization program for different TCP variants with varying window sizes with finite means.

\begin{proposition} \label{prop:optimization_trick_statement} In the general system model, under the above assumption, (i.e., if $\sum_{r: A(l,r) = 1} \lambda_r < C_l$ then  $\mathbb{E}[Q_l] = 0$) the primal solutions of the optimization problem 
\begin{equation}
\label{eqn:optimization_problem}
\max \sum_{r \in \mathcal{R}} \mathbb{E}[s_r] \Bigl((1 - p_r) \mathbb{E}[W_r] \log(\lambda_r) - \lambda_r \Delta_r \Bigr)
\end{equation}
such that
\begin{equation*}
\lambda_r \geq 0, \forall r \in \mathcal{R} \text{ and } \sum_{r: A(l,r) = 1}\lambda_r \mathbb{E}[s_r] \leq C_l,  \forall l \in \mathcal{L}, 
\end{equation*}
provide the system throughputs, $\{\lambda_r: r \in \mathcal{R}\}$.
\end{proposition}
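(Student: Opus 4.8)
The plan is to recognize \eqref{eqn:optimization_problem} as a strictly concave maximization over a polytope and to show that its Karush--Kuhn--Tucker (KKT) conditions are \emph{exactly} the system \eqref{eqn:LL}--\eqref{eqn:capacity_constraints}, with the Lagrange multipliers of the capacity constraints playing the role of the mean sojourn times $M_l$. Since $\mathbb{E}[s_r]>0$ and $(1-p_r)\mathbb{E}[W_r]>0$, each summand $\mathbb{E}[s_r](1-p_r)\mathbb{E}[W_r]\log(\lambda_r)$ is strictly concave in $\lambda_r$ on $(0,\infty)$ while $-\mathbb{E}[s_r]\lambda_r\Delta_r$ is linear; the Hessian of the objective is diagonal with entries $-\mathbb{E}[s_r](1-p_r)\mathbb{E}[W_r]/\lambda_r^2<0$, so the objective is strictly concave and, over the closed convex feasible set, admits a unique maximizer. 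This is the varying-window analogue of the fixed-window argument of \cite{Massoulie2002}, with the fixed windows replaced by the mean windows $\mathbb{E}[W_r]$ supplied by the Markov models of Sections \ref{sec:cubic_markovmodel} and \ref{sec:ctcp_markovmodel}.

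First I would dispose of the boundary: because $\log(\lambda_r)\to-\infty$ as $\lambda_r\downarrow 0$, any feasible point with some $\lambda_r=0$ is strictly dominated (a small increase of that coordinate keeps all capacity constraints satisfied and strictly raises the objective), so the maximizer has $\lambda_r>0$ for all $r$, the nonnegativity constraints are inactive there, and the objective is differentiable at the optimum. Next I would verify Slater's condition: taking all $\lambda_r$ sufficiently small makes $\sum_{r:A(l,r)=1}\lambda_r\mathbb{E}[s_r]<C_l$ strictly for every $l\in\mathcal{L}$, so a strictly feasible point exists, strong duality holds, and for this concave program the KKT conditions are both necessary and sufficient for optimality.

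Then I would write the Lagrangian
\begin{equation*}
\mathcal{L}(\lambda,M)=\sum_{r\in\mathcal{R}}\mathbb{E}[s_r]\Bigl((1-p_r)\mathbb{E}[W_r]\log(\lambda_r)-\lambda_r\Delta_r\Bigr)-\sum_{l\in\mathcal{L}}M_l\Bigl(\sum_{r:A(l,r)=1}\lambda_r\mathbb{E}[s_r]-C_l\Bigr),
\end{equation*}
with multipliers $M_l\ge 0$. Stationarity $\partial\mathcal{L}/\partial\lambda_r=0$ gives, after dividing by $\mathbb{E}[s_r]>0$,
\begin{equation*}
\frac{(1-p_r)\mathbb{E}[W_r]}{\lambda_r}=\Delta_r+\sum_{l:A(l,r)=1}M_l,
\end{equation*}
which is \eqref{eqn:LL}; primal feasibility is \eqref{eqn:capacity_constraints}; and complementary slackness $M_l\bigl(C_l-\sum_{r:A(l,r)=1}\lambda_r\mathbb{E}[s_r]\bigr)=0$ is \eqref{eqn:slackness}. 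Conversely, any $(\lambda,M)$ with $\lambda_r>0$ and $M_l\ge 0$ solving \eqref{eqn:LL}--\eqref{eqn:capacity_constraints} satisfies every KKT condition, hence $\lambda$ is the unique primal optimizer, and identifying $M_l$ with the mean sojourn time at link $l$ closes the correspondence.

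I do not expect a serious obstacle here: the argument is routine convex duality. The only points needing care are (a) justifying $\lambda_r>0$ at the optimum so that the logarithm is differentiable and the multipliers of the $\lambda_r\ge 0$ constraints vanish, and (b) checking that the $M_l\ge 0$ sign convention is consistent with interpreting $M_l$ as a nonnegative mean sojourn time, which holds because the capacity constraint is written in $\le$ form. Strict concavity then upgrades ``KKT point'' to ``unique primal solution,'' giving the claimed equivalence between the optimization program and the system \eqref{eqn:LL}--\eqref{eqn:capacity_constraints}.
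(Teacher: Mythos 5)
Your proof is correct and follows essentially the same route as the paper: verify Slater's condition for the concave program, write the KKT conditions, and identify them with \eqref{eqn:LL}--\eqref{eqn:capacity_constraints} via the multipliers $M_l$ of the capacity constraints, using $\mathbb{E}[W_r]>0$ to force $\lambda_r>0$ and kill the nonnegativity multipliers. If anything, your treatment of strict feasibility (taking all $\lambda_r$ small and positive) is cleaner than the paper's claim that $\lambda_r=0$ for all $r$ is strictly feasible, which is not literally compatible with the logarithmic objective or with strict satisfaction of $\lambda_r\geq 0$.
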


\begin{proof}
Consider the optimization problem \eqref{eqn:optimization_problem}. Let $\{\lambda_r^{*}: r \in \mathcal{R}\}$ denote the optimal point for \eqref{eqn:optimization_problem} and let us denote the dual optimal points corresponding to the capacity constraints by $\{M_l^{*}: l \in \mathcal{L}\}$ and the dual optimal points corresponding to the non-negativity constraints by $\{\gamma_r^{*}: r \in \mathcal{R}\}$. The KKT conditions \cite{Boyd2004} for the above problem are given as follows. For each $r \in \mathcal{R}$ and each $l \in \mathcal{L}$, we have
\begin{subequations}
\label{eqn:KKT}
\begin{flalign}
& (1 - p_r)\mathbb{E}[W_r] = \lambda_{r}^{*} (\Delta_r + \sum_{l: A(l,r) = 1} M_l^{*} + \frac{\gamma_{r}^{*}}{\mathbb{E}[s_r]}),  \label{eqn:KKT_1}\\
& \hspace{15mm} M_l^{*} (C_l - \sum_{r: A(l,r) = 1}\lambda_r^{*} \mathbb{E}[s_r] ) = 0, \label{eqn:KKT_slack_1}\\
& \hspace{35mm} \gamma_r^{*}\lambda_r^{*} = 0 \label{eqn:KKT_slack_2}.
\end{flalign}
\end{subequations}
Also $M_l^{*} \geq 0$, for each $l$ and $\lambda_r^{*} \geq 0$, for each $r$. The objective function is concave and $\lambda_r = 0$, $\forall r \in \mathcal{R}$ is strictly feasible. Therefore by Slater's condition \cite{Boyd2004}, strong duality holds. Hence any pair of primal optimal and dual optimal points must satisfy the KKT conditions listed in \eqref{eqn:KKT}. Since the problem is concave, the KKT conditions are also sufficient for optimality. Therefore we conclude that any solution to \eqref{eqn:optimization_problem} also satisfies \eqref{eqn:KKT} and vice versa. 

For each $r \in \mathcal{R}$, the term $\mathbb{E}[W_r]$ in \eqref{eqn:LL} is greater than $0$ (since TCP window sizes are $\geq 1$). Thus, for any solution to \eqref{eqn:LL}--\eqref{eqn:capacity_constraints},  $\lambda_r > 0$ for each $r\in \mathcal{R}$. Therefore under the above assumptions, any $\{\lambda_r > 0: r \in \mathcal{R}\}$ and $\{M_l \geq 0: l \in \mathcal{L}\}$ which satisfies \eqref{eqn:LL} and \eqref{eqn:slackness} satisfies the KKT conditions \eqref{eqn:KKT} with $\gamma_r^{*} = 0$ for all $r \in \mathcal{R}$. Therefore $\lambda_r$ for all $r \in \mathcal{R}$ and $M_l$ for all $l \in \mathcal{L}$ which satisfy \eqref{eqn:LL}, \eqref{eqn:slackness} and \eqref{eqn:capacity_constraints} also satisfy the KKT conditions.
\end{proof}

The optimization program \eqref{eqn:optimization_problem} is a strict concave optimization problem with affine constraints and hence has a unique solution. Thus this optimization problem can be solved via the usual convex-optimization algorithms \cite{Boyd2004}. We note that as strong duality holds for \eqref{eqn:optimization_problem}, we can solve it via its dual. Algorithm \ref{alg:KKT_solver} solves the optimization program \eqref{eqn:optimization_problem} via $\{M_l: l \in \mathcal{L}\}$, the dual variables. It is a projected gradient method used to minimize the dual of optimization program \eqref{eqn:optimization_problem}. In Algorithm \ref{alg:KKT_solver}, the dual variable, $M_l$, i.e., the queuing delay at link $l$ is reduced whenever the arrival rate $\displaystyle{\sum_{r: A(l,r) = 1}\lambda_r \mathbb{E}[s_r]}$ at link $l$ exceeds capacity $C_l$ and is increased whenever the arrival rate $\displaystyle{\sum_{r: A(l,r) = 1}\lambda_r \mathbb{E}[s_r]}$ at link $l$ is less than capacity $C_l$. If we choose $d$ small enough, Algorithm \ref{alg:KKT_solver} converges after either setting $M_l$ close to $0$  or when $\displaystyle{\sum_{r: A(l,r) = 1}\lambda_r \mathbb{E}[s_r]}$ at link $l$ almost equals the capacity $C_l$, for all $l \in \mathcal{L}$. Similar techniques for solving optimization programs have been used in \cite{Krishna2013} and \cite{Palomar2006}. Through simulation results described in Section \ref{sec:simulation_results}, we will see that our approximations made in this approach are reasonable.
\section{Simulation Results}
\label{sec:simulation_results}

In this section, we will validate our models using ns2 simulations. We use extensions from \cite{Wei2006NS2} for high speed TCP which use actual Linux TCP code for the simulations. We first look at the results for single bottleneck link.

\subsection{Single Bottleneck Link}
We first consider the case when multiple flows go through a single bottleneck link and the remaining connecting links have sufficiently high capacity. We use the M/G/1 approximation based model and the optimization approach for a single bottleneck queue discussed earlier in Sections \ref{sec:MG1_approx} and \ref{sec:optimization} respectively. In this setup, we have $12$ flows ($4$ Compound, $4$ CUBIC, $4$ New Reno) going through a single bottleneck link. We number the flows $1-12$, the flows indexed $i$ such that $\mod(i,3) = 1$\footnote{ $mod(i,j) = i \% j$ } are TCP Compound flows, $\mod(i,3) = 2$ are TCP CUBIC and the rest are TCP New Reno. These flows have different propagation delays. The first three flows have propagation delay of $0.01$ sec, flows $4-6$ have delay of $0.02$ sec, flows $7-9$ have delay of $0.1$ sec and the last three flows have propagation delay of $0.2$ sec. All flows have TCP packet size of $1050$ bytes which is the default packet size of ns2. The packet error rates for all flows is $0.001$. We consider three cases with different bottleneck link capacities. In Figure \ref{fig:per_001_50M_1N_12F}, the bottleneck capacity is $50$ Mbps which illustrates the situation when there is severe congestion. Figures \ref{fig:per_001_100M_1N_12F}, \ref{fig:per_001_1G_1N_12F} have bottleneck capacities of $100$ Mbps and $1$ Gbps illustrating moderate and low congestion scenario respectively. Our theoretical results are quite close to the simulation results and all errors are less than $10\%$. In Table \ref{tbl:avg_queue_singleBL}, we compare results for the average queuing at the bottleneck link and the bottleneck link utilization.

From the Figures \ref{fig:per_001_50M_1N_12F}, \ref{fig:per_001_100M_1N_12F} and \ref{fig:per_001_1G_1N_12F}, we see that among the different TCP versions with same propagation delay, TCP Compound (flows with $\mod(i,3) = 1$) gets the highest throughput when the propagation delay is small ($\Delta \leq 0.1$ sec) whereas TCP CUBIC (flows with $\mod(i,3) = 1$) gets the highest throughput when propagation delay is large ($\Delta = 0.2$ sec). However as link speeds reduce from $1$ Gbps to $50$ Mbps we see that the different TCP versions have a more equitable share of the bottleneck link capacity, for $\Delta \leq 0.1$ sec. This happens due to the increase in RTT, due to queuing, when the link speed is reduced from $1$ Gbps to $50$ Mbps which causes (a) increase in average window size of TCP CUBIC as for fixed PER its window size grows with RTT and (b) decrease in average window size of TCP Compound which decreases with increase in queue size.

\begin{figure}
  \centering
  \includegraphics[scale=0.25, trim = 80 5 140 5, clip=true]{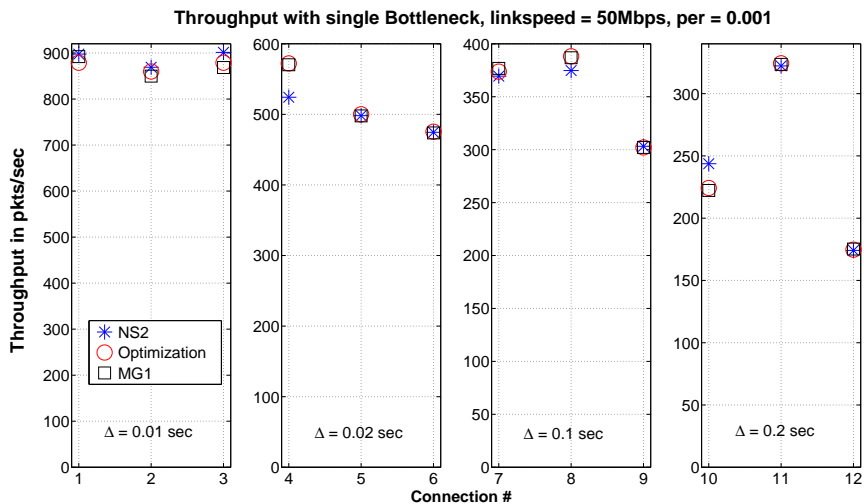}
  \caption{Single bottleneck with link capacity $50$ Mbps.}
  \label{fig:per_001_50M_1N_12F}
\end{figure}

\begin{figure}
  \centering
  \includegraphics[scale=0.25, trim = 80 5 140 5, clip=true]{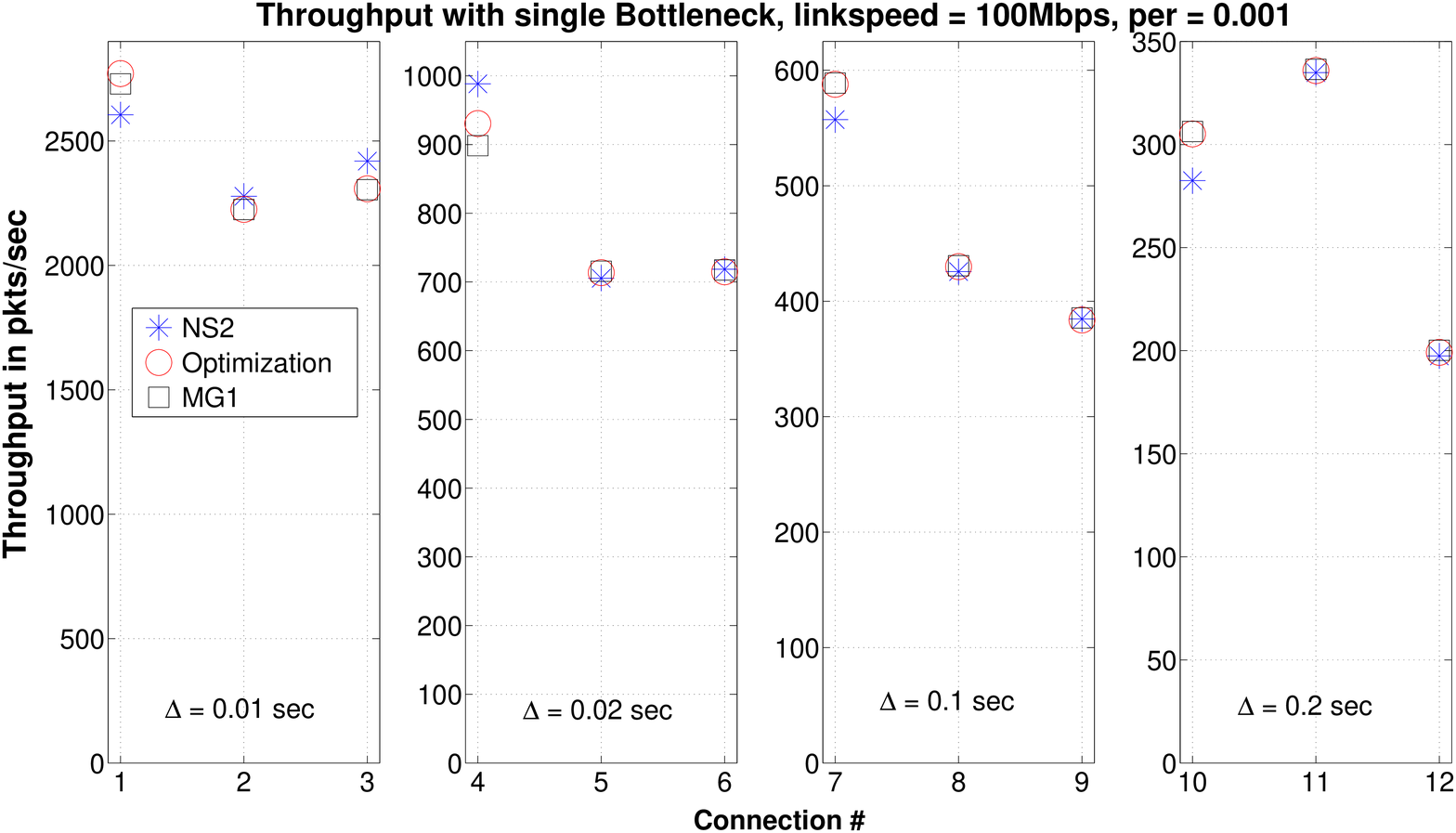}
  \caption{Single bottleneck with link capacity $100$ Mbps.}
  \label{fig:per_001_100M_1N_12F}
\end{figure}

\begin{figure}
  \centering
  \includegraphics[scale=0.25, trim = 80 5 140 5, clip=true]{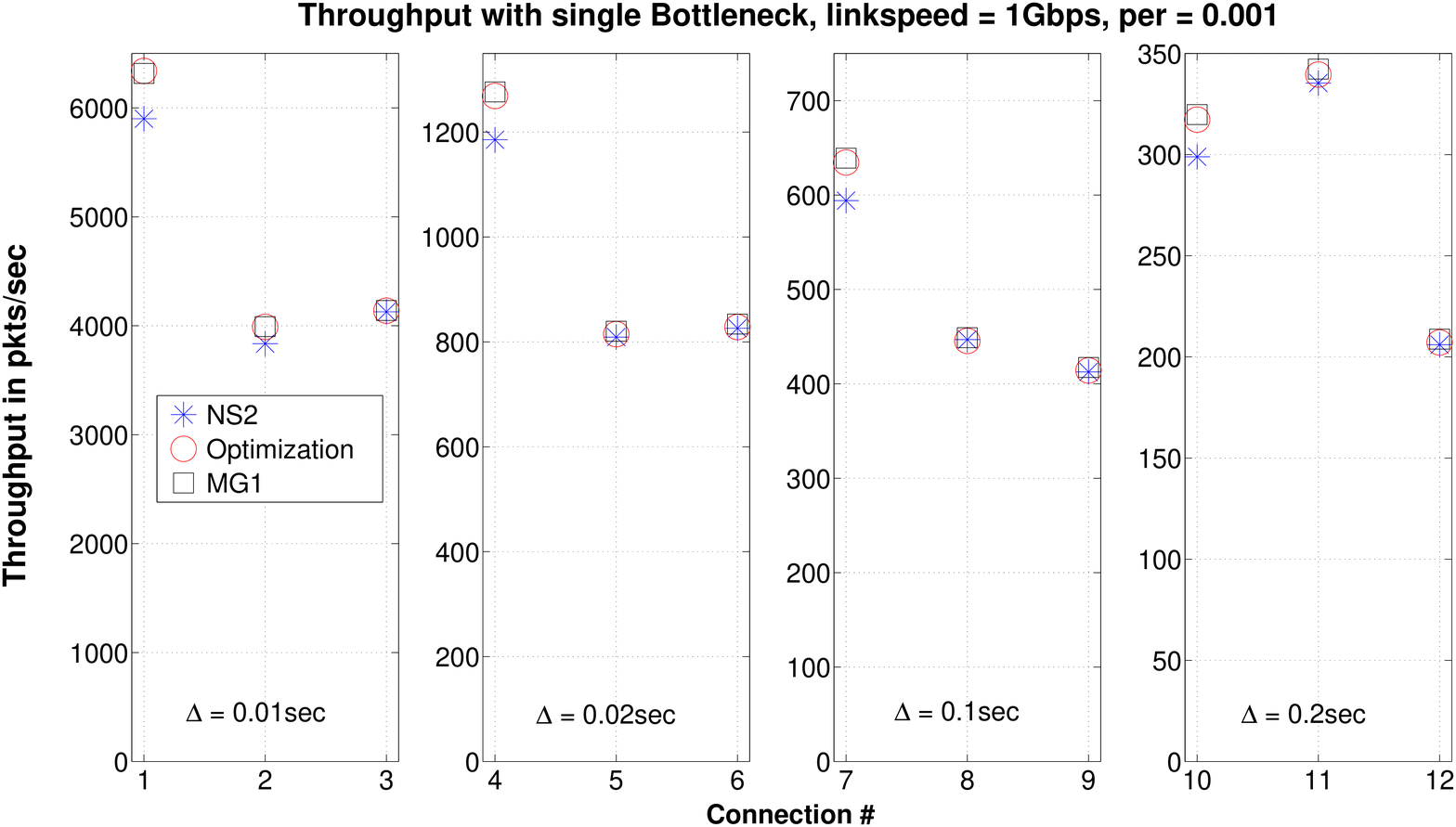}
  \caption{Single bottleneck with link capacity $1$ Gbps.}
  \label{fig:per_001_1G_1N_12F}
\end{figure}

\begin{table}
\centering
\caption{Single bottleneck: average queue size, link utilization at bottleneck queue.}
\scalebox{0.9}{
\begin{tabular}{|c|c|c|c|c|c|c|}
\hline
Link Speed & \multicolumn{3}{|c|}{Average Queue Size (in packets)} & \multicolumn{3}{|c|}{Normalized Link utilization}\\ \cline{2-7}
& ns2 & Approx. Model & MG1 & ns2 & Approx. Model & MG1 \\ 
\hline
$50$Mbps & 223.2& 220.2 & 217.4 & 1.0 & 1.0 & 0.997 \\
$100$Mbps & 92.4 & 94.0 & 87.3 & 0.999 & 1.0 & 0.994 \\
$1$Gbps & 0.17 & 0.0 & 0.016 & 0.159 & 0.166 & 0.166\\
\hline
\end{tabular}
}
\label{tbl:avg_queue_singleBL}
\end{table}

\subsection{Multiple Bottleneck Links}
In this section, we consider examples with multiple bottleneck links. We consider the case when besides TCP packets, ACKs also get delayed. In the first example in this scenario, we have ten routers and a total of $15$ flows, $5$ using TCP Compound, $5$ using CUBIC and $5$ using Reno. We group the flows into $5$ flow groups, ($F_1-F_5$) with each group consisting of $3$ flows with one flow each of TCP Compound, CUBIC and Reno. The network topology is shown in Figure \ref{fig:10N_15F_diag}. The links are bidirectional and are symmetric, i.e., forward direction and reverse direction have the same bandwidth and propagation delay. We note that in this example, the ACKs may also be subject to queuing which is accounted for in both the M/G/1 based models and the optimization based discussed in Sections \ref{sec:MG1_approx} and \ref{sec:optimization}. 

In Figures \ref{fig:10N_15F_case1A} and \ref{fig:10N_15F_case9A} we compare the throughputs obtained by the different flows for two different configurations with network topology given in Figure \ref{fig:10N_15F_diag}. In Figures \ref{fig:10N_15F_case1A_util} and \ref{fig:10N_15F_case9A_util} we compare the link utilization for links with link utilization $> 70 \%$ (in ns2 simulations) in these two configurations. The two configurations have different link speeds (mentioned in Figures \ref{fig:10N_15F_case1A_util} and \ref{fig:10N_15F_case9A_util}) and we denote them as config. 1 and config. 2. All flows in a flow group have the same propagation delay and packet error rate. Flows are numbered $1-15$. Flows numbered $i$ such that $\mod(i,3) = 1$ are TCP Compound flows, flows with $\mod(i,3) = 2$ are TCP CUBIC whereas rest are TCP New Reno flows. The parameters used for simulation are mentioned in Figures \ref{fig:10N_15F_case1A} and \ref{fig:10N_15F_case9A}. The link speeds are mentioned in Mbps and the propagation delays are mentioned in sec. All flows have packet sizes $1050$ bytes which is the default value in ns2. For the first configuration, for the throughput obtained by the different flows, the M/G/1 model results differ from the ns2 simulations by less than $8\%$ whereas for the optimization approach the maximum difference is $10.5\%$. In the second configuration, for the throughput obtained by the different flows, both models differ from the ns2 simulations by less than $8\%$.
From \ref{fig:10N_15F_case1A_util} and \ref{fig:10N_15F_case9A_util}, we see that both the M/G/1 and the optimization techniques identify the bottleneck links (links with utilization  $ > 70\%$) correctly. The difference in link utilization, as compared to ns2 simulations, for all the links in the network, is less than $4\%$ for the M/G/1 technique in both the configurations. For the optimization technique, the difference from ns2 simulations is less than $7\%$.

\begin{figure}
  \centering
  \includegraphics[scale=0.8]{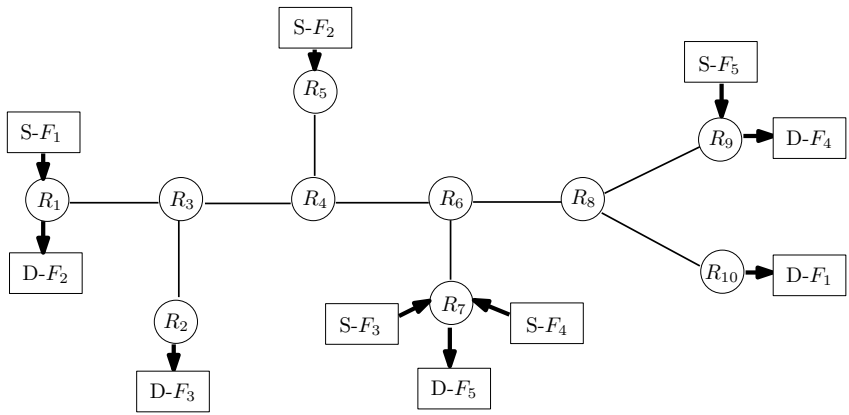}
  \caption{Topology of scenario with 10 Routers, 15 Flows.}
  \label{fig:10N_15F_diag}
\end{figure}

\begin{figure}
  \centering
  \includegraphics[scale=0.25, trim = 80 5 140 5, clip=true]{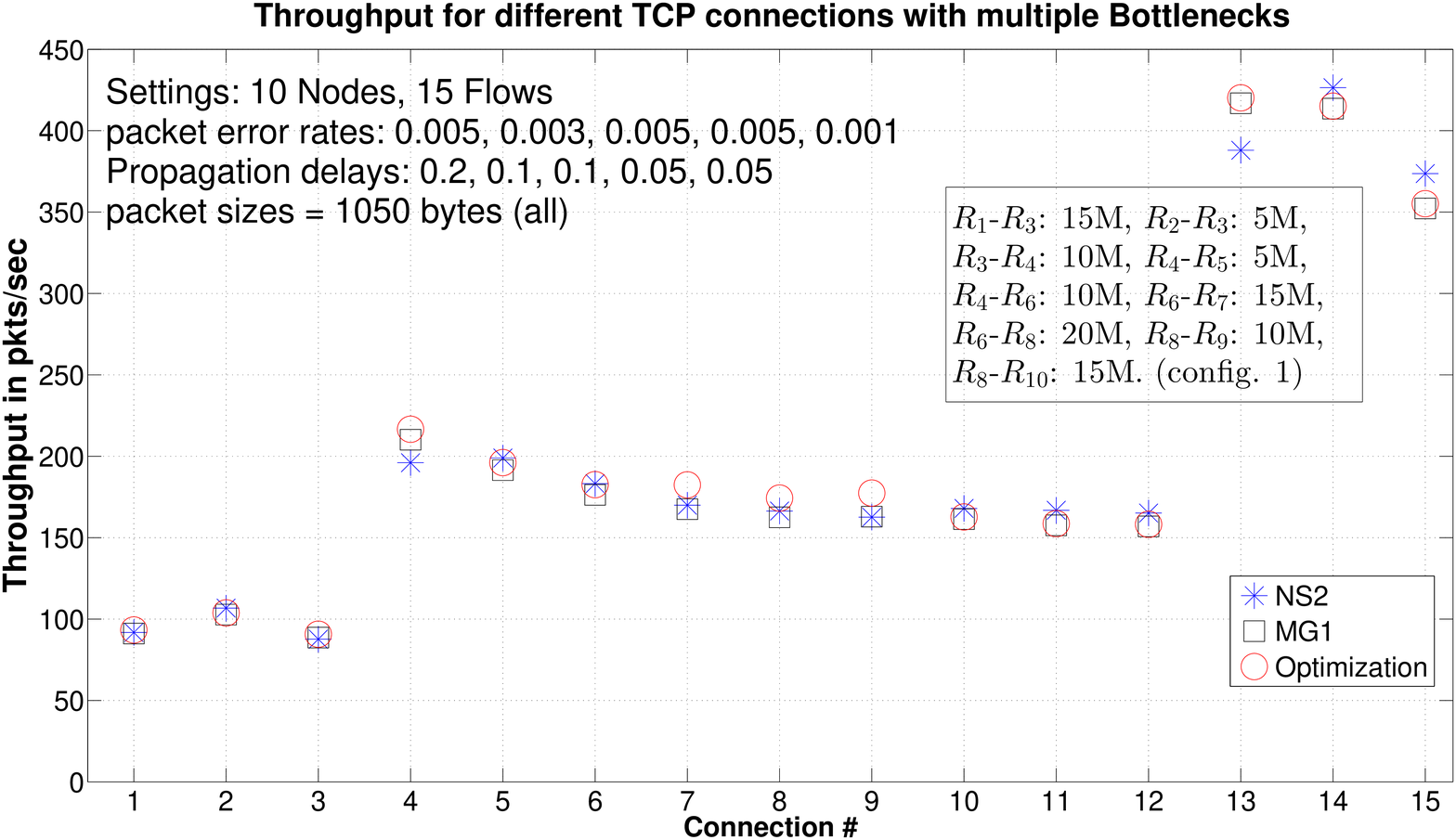}
  \caption{Throughputs for example in Figure \ref{fig:10N_15F_diag}, config. 1.}
  \label{fig:10N_15F_case1A}
\end{figure}

\begin{figure}
  \centering
  \includegraphics[scale=0.25, trim = 80 5 140 5, clip=true]{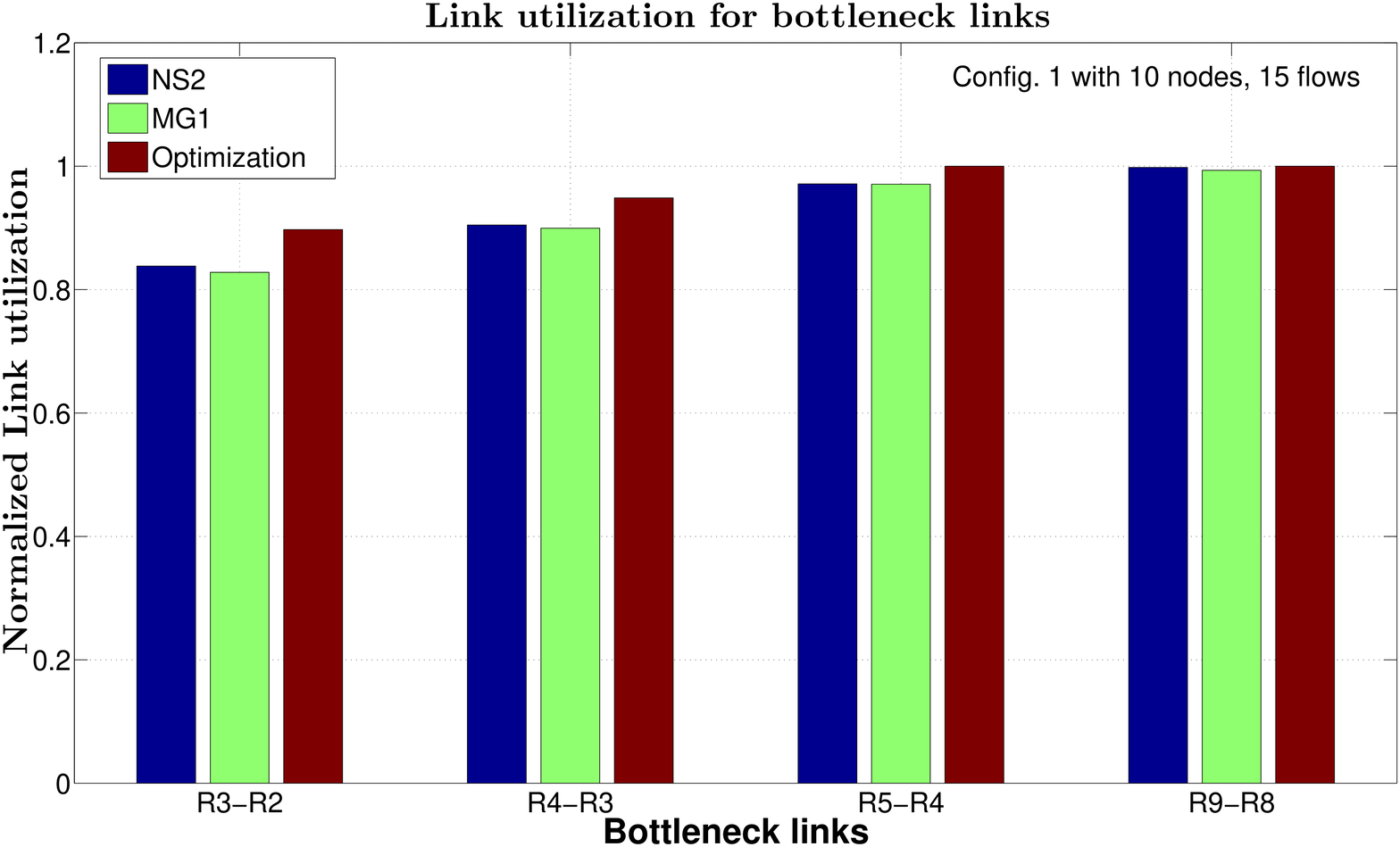}
  \caption{Link utilization for example in Figure \ref{fig:10N_15F_diag}, config. 1.}
  \label{fig:10N_15F_case1A_util}
\end{figure}

\begin{figure}
  \centering
  \includegraphics[scale=0.25, trim = 80 5 140 5, clip=true]{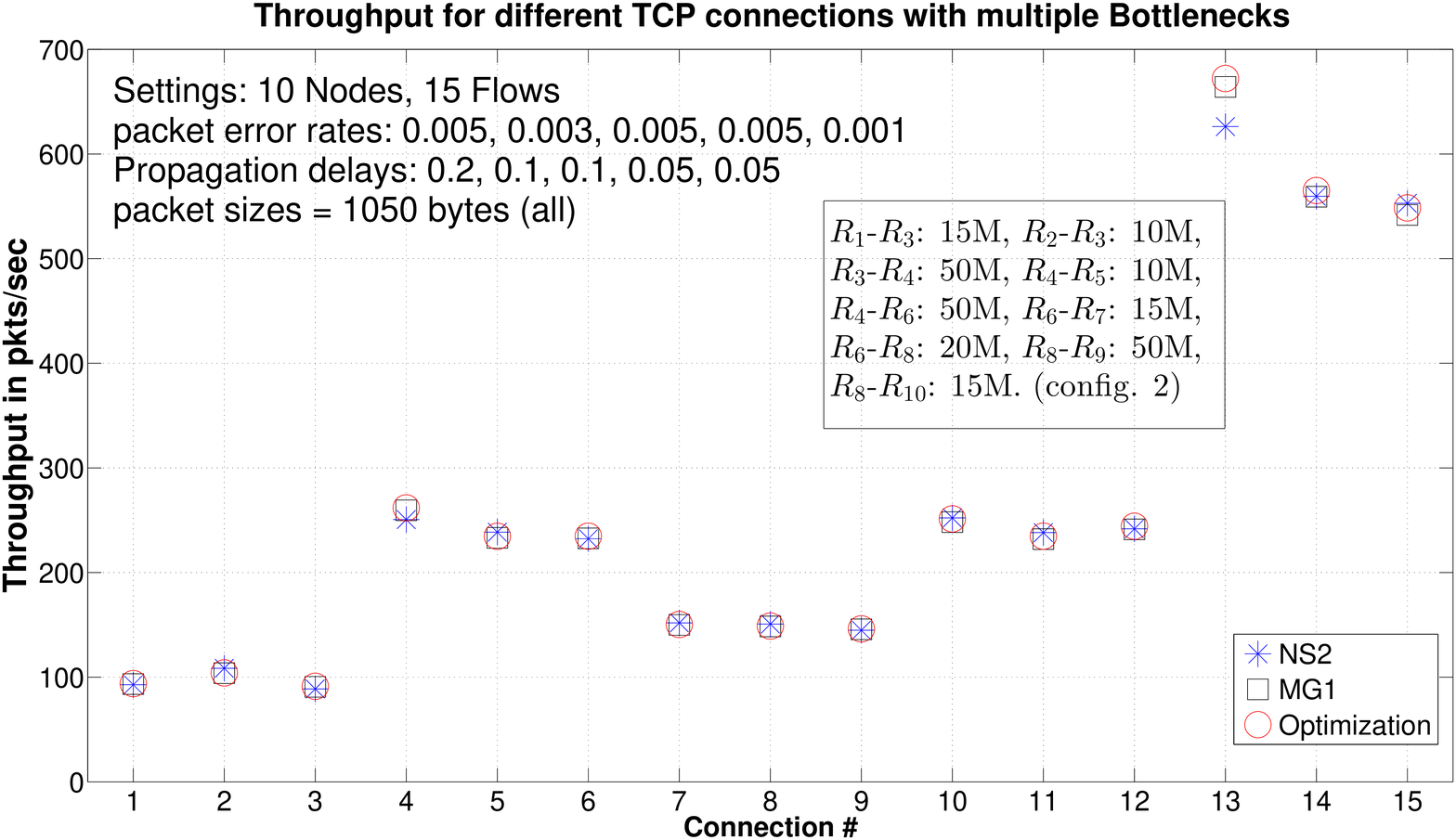}
  \caption{Throughputs for example in Figure \ref{fig:10N_15F_diag}, config. 2.}
  \label{fig:10N_15F_case9A}
\end{figure}

\begin{figure}
  \centering
  \includegraphics[scale=0.25, trim = 80 5 140 5, clip=true]{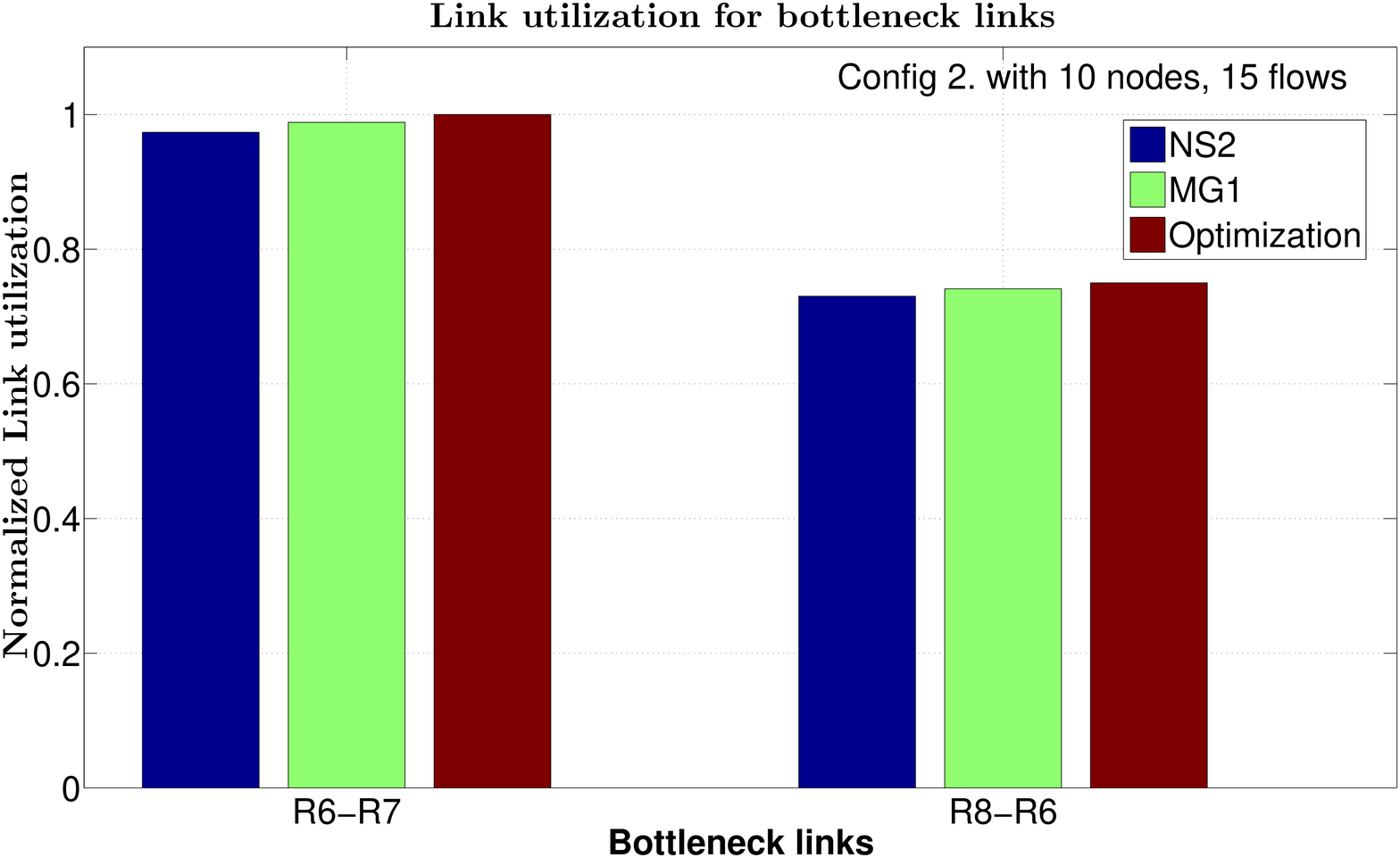}
  \caption{Link Utilization for example in Figure \ref{fig:10N_15F_diag}, config. 2.}
  \label{fig:10N_15F_case9A_util}
\end{figure}

Next we consider the scenario with $12$ routers and $18$ flows with the topology shown in Figure \ref{fig:12N_18F_diag}. We now have $6$ groups of flows with each group consisting of $3$ flows with one flow each of TCP Compound, CUBIC and Reno. In Figures \ref{fig:12N_18F_case7B_2} and \ref{fig:12N_18F_case13B_2} we compare the throughputs obtained by the different flows for two different configurations with network topology given in Figure \ref{fig:12N_18F_diag}. In Figures \ref{fig:12N_18F_case7B_util} and \ref{fig:12N_18F_case13B_util} we compare the link utilization for links with link utilization $> 70 \%$ (in ns2 simulations). The configurations differ in the link speeds (mentioned in Figures \ref{fig:12N_18F_case7B_2} and \ref{fig:12N_18F_case13B_2}) and we denote them as config.3 and config. 4. The parameters used for simulation are mentioned in Figures \ref{fig:12N_18F_case7B_2} and \ref{fig:12N_18F_case13B_2}. The link speeds are mentioned in Mbps and the propagation delays are mentioned in sec.  The packet sizes (in bytes) for the different TCP flows are different and are chosen from $\{600, 1050, 1500\}$. For both the configurations, for the throughput obtained by the different flows, the M/G/1 model and the optimization approach differ from simulations by less than $14\%$. From \ref{fig:12N_18F_case7B_util} and \ref{fig:12N_18F_case13B_util}, we see that both the M/G/1 and the optimization techniques identify the bottleneck links (links with utilization $> 70\%$) correctly. The difference in link utilization, as compared to ns2 simulations, for all the links in the network, is less than $4\%$ for the M/G/1 technique and the optimization technique in both the configurations.

The simulation setting in configurations $3$ and $4$ differ only in the link speeds of the different links with the link speeds in configuration $4$ being greater than equal to the link speeds in configuration $3$. We observe that for flows in group $2$ and group $5$ (flows numbered $4, 5, 6$ and $13, 14, 15$), the TCP CUBIC flow receives the highest throughput in configuration $3$, whereas TCP Compound flow receives the highest throughput in configuration $4$. This behaviour is due to the difference in the queuing at the links in the two different configurations. The packet error rates for flows in group $2$ and group $5$ is $0.001$ and $0.0001$ respectively. For these low per values, TCP Compound flows get higher throughput than TCP CUBIC flows (see Figure \ref{fig:12N_18F_case13B_2}) when there is negligible queuing. However for non-negligible queuing as in configuration $3$, TCP Compound flows behave like Reno while TCP CUBIC flows are more aggressive and get higher throughputs.


\begin{figure}
  \centering
  \includegraphics[scale=0.75]{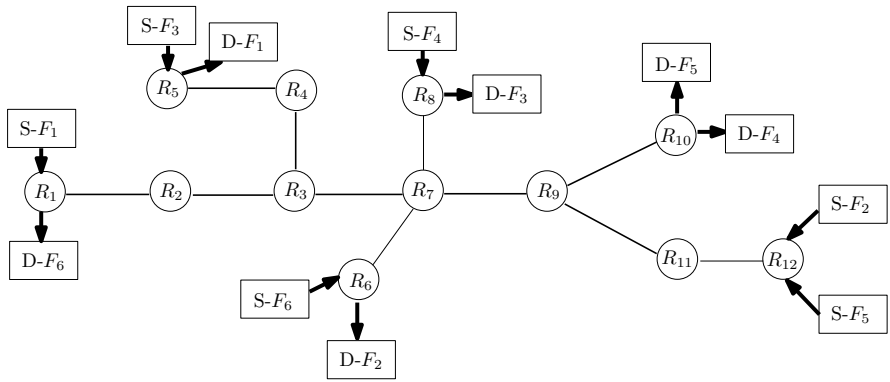}
  \caption{Topology of scenario with 12 routers, 18 flows.}
  \label{fig:12N_18F_diag}
\end{figure}

%
%
%
%

\begin{figure}
  \centering
  \includegraphics[scale=0.23, trim = 80 5 0 0, clip=true]{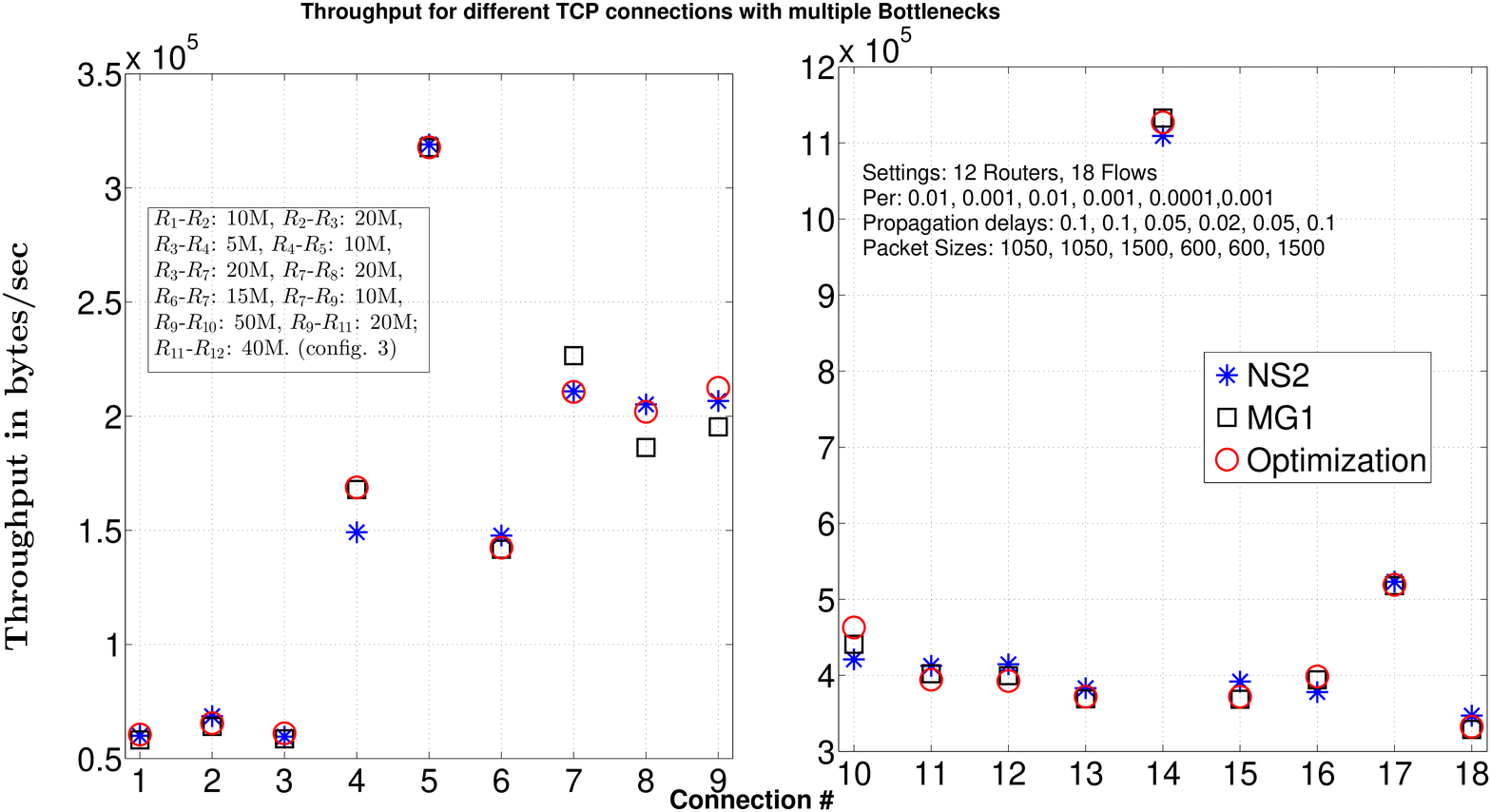}
  \caption{Throughputs for example in Figure \ref{fig:12N_18F_diag}, config. 3.}
  \label{fig:12N_18F_case7B_2}
\end{figure}

\begin{figure}
  \centering
  \includegraphics[scale=0.25, trim = 80 5 140 5, clip=true]{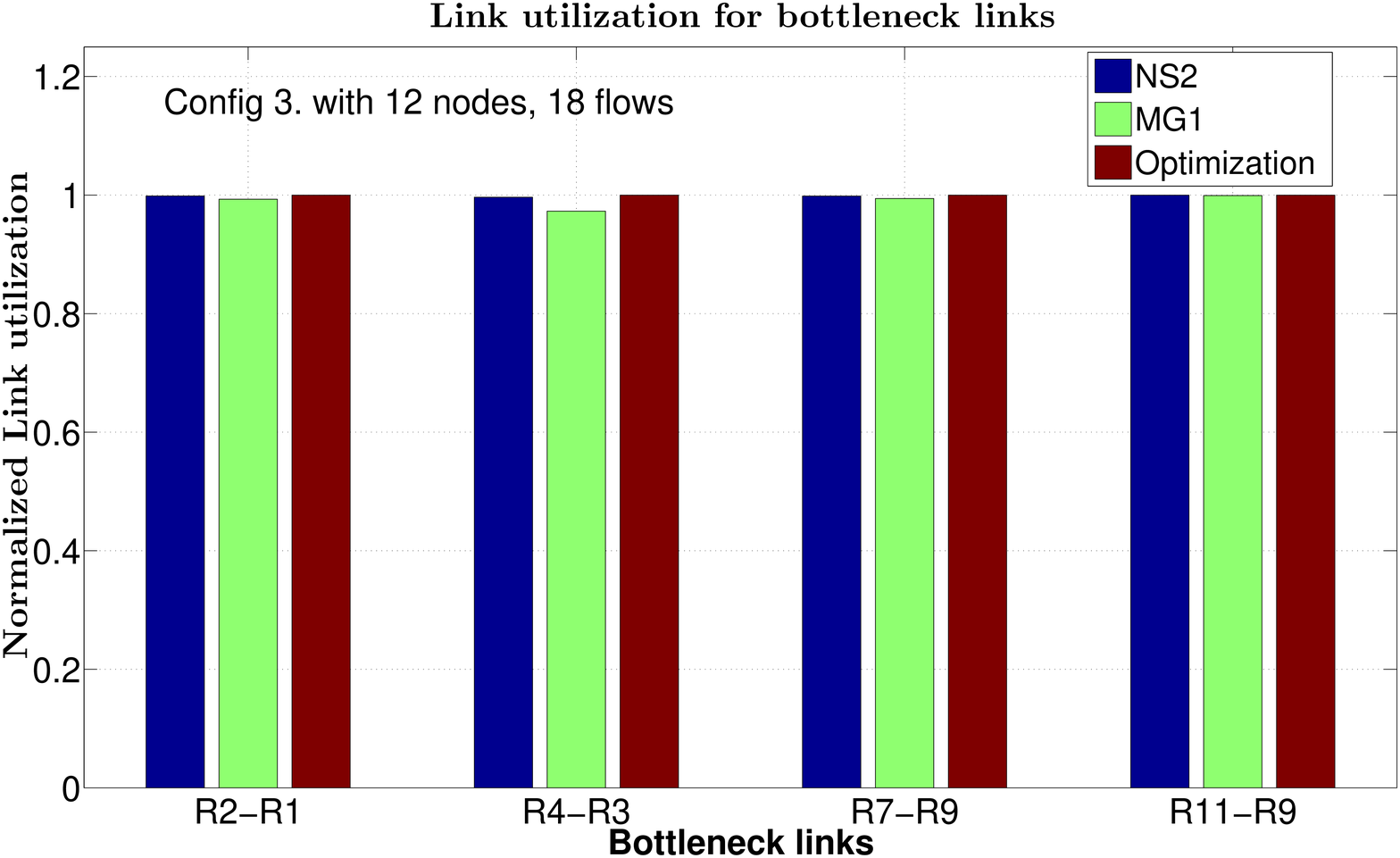}
  \caption{Link Utilization for example in Figure \ref{fig:12N_18F_diag}, config. 3.}
  \label{fig:12N_18F_case7B_util}
\end{figure}

\begin{figure}
  \centering
  \includegraphics[scale=0.25, trim = 80 5 140 5, clip=true]{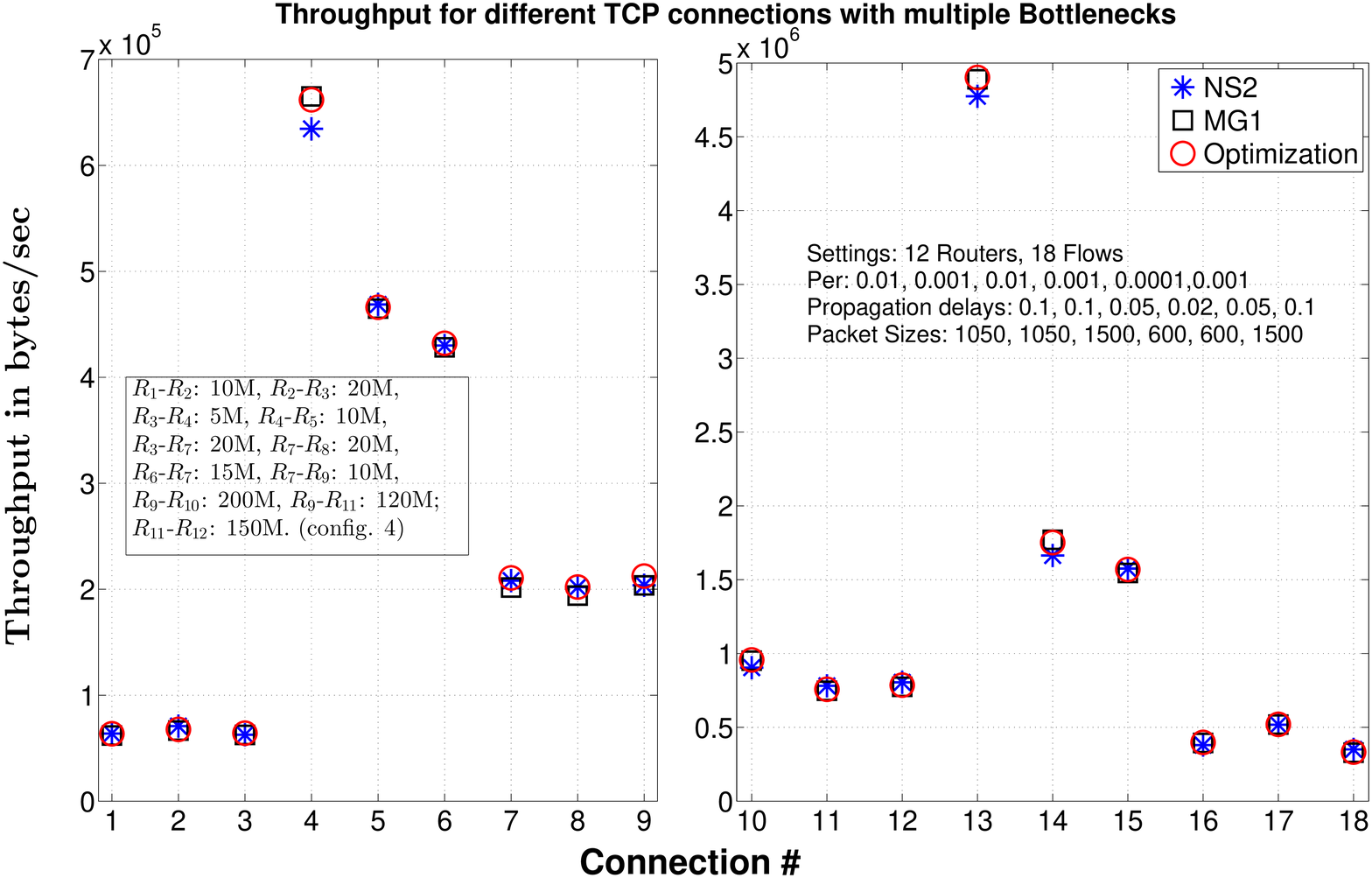}
  \caption{Throughputs for example in Figure \ref{fig:12N_18F_diag}, config. 4.}
  \label{fig:12N_18F_case13B_2}
\end{figure}

\begin{figure}
  \centering
  \includegraphics[scale=0.25, trim = 80 5 140 5, clip=true]{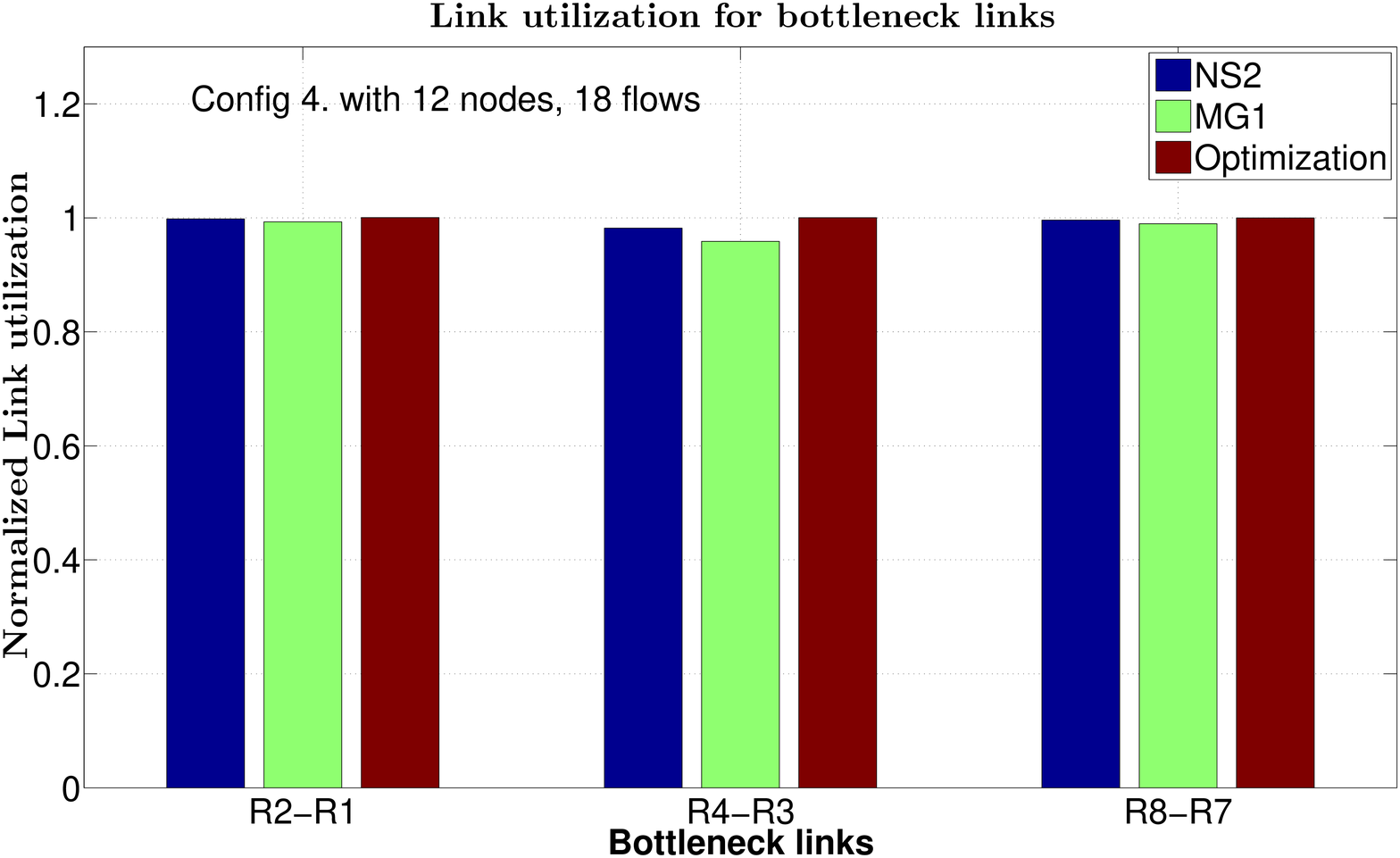}
  \caption{Link Utilization for example in Figure \ref{fig:12N_18F_diag}, config. 4.}
  \label{fig:12N_18F_case13B_util}
\end{figure}

\section{Conclusions}
\label{sec:conclusion}

We have developed Markovian models for computing average window size for a single TCP connection with fixed RTT, i.e., non-negligible queuing with Bernoulli random losses for TCP CUBIC and TCP Compound. For TCP Compound, we compute the average window size for a single TCP flow with non-negligible queuing and random losses. We use two techniques to compute the steady state throughput for multiple TCP flows (which could be using TCP CUBIC, TCP Compound, TCP New Reno) going through a multihop network. The first technique approximates the links by M/G/1 queues. The second technique uses an optimization program whose solution approximates the steady state throughput for the different TCP flows. We compare results obtained from our techniques with ns2 simulation. Our results match well with simulations. 

\bibliographystyle{IEEEtran} 
\bibliography{tcp-references}
\end{document}